\crefname{appsec}{Appendix}{Appendices}
\theoremstyle{plain}
\newtheorem{theorem}{Theorem}
\newtheorem{corollary}[theorem]{Corollary}
\newtheorem{remark}{Remark}
\newtheorem{lemma}{Lemma}
\def\bx{ {\bf x} }
\def\by{ {\bf y} }
\def\bX{ {\bf X} }
\def\ba{ {\bf a} }
\begin{document}
%
\title{An Optimal Computing Budget Allocation Tree Policy for Monte Carlo Tree Search}
%
%
%

\author{Yunchuan Li,~\IEEEmembership{}
        Michael C. Fu~\IEEEmembership{}
        and Jie Xu~\IEEEmembership{}
        \thanks{This work was supported in part by the National Science Foundation under Grant CMMI-1434419 and DMS-1923145, the Air Force Office of Scientific Research under Grant FA9550-19-1-0383 and by the Defense Advanced Research Projects Agency (DARPA) under Grant  N660011824024. The views, opinions, and/or findings expressed are those of the authors and should not be interpreted as representing the official views or policies of the Department of Defense or the U.S. Government. A preliminary version of this work \cite{li2019monte} was published in the proceedings of the 2019 IEEE Conference on Decision and Control. }
\thanks{Y. Li is with the Department
of Electrical and Computer Engineering and the Institute of Systems Research, University of Maryland, College Park USA, e-mail: yli93@terpmail.umd.edu.} 
\thanks{M. C. Fu is with the R. H. Smith School of Business and the Institute of Systems Research, University of Maryland, College Park USA, e-mail: mfu@umd.edu. }
\thanks{Jie Xu is with the Department of Systems Engineering and Operations Research, George Mason University, Fairfax VA 22030, USA, email: jxu13@gmu.edu. }
}

%
%

\markboth{IEEE Transactions On Automatic Control, Vol. XX, No. Y, Month
	Year}%
{Li, Fu and Xu: article title}
%



\maketitle

\begin{abstract}
	We analyze a tree search problem with an underlying Markov decision process, in which the goal is to identify the best action at the root that achieves the highest cumulative reward. We present a new tree policy that optimally allocates a limited computing budget to maximize a lower bound on the probability of correctly selecting the best action at each node. Compared to widely used Upper Confidence Bound (UCB) tree policies, the new tree policy presents a more balanced approach to manage the exploration and exploitation trade-off when the sampling budget is limited. Furthermore, UCB assumes that the support of reward distribution is known, whereas our algorithm relaxes this assumption. Numerical experiments demonstrate the efficiency of our algorithm in selecting the best action at the root.
\end{abstract}

\begin{IEEEkeywords}
Stochastic optimal control, Monte Carlo tree search, machine learning, optimization algorithms
\end{IEEEkeywords}

%
\IEEEpeerreviewmaketitle

\section{Introduction}
We consider a reinforcement learning problem where an agent interacts with an underlying environment. A Markov Decision Process (MDP) with finite horizon is used to model the environment. In each move, the agent will take an action, receive a reward and land in a new state. The reward is usually random, and its distribution depends on both the state of the agent and the action taken. The distribution of the next state is also determined by the agent's current state and action. Our goal is to determine the optimal sequence of actions that leads to the highest expected reward. The optimality of the decision policy will be evaluated by the probability of correctly selecting the best action in the first stage of the underlying MDP. \par 

If the distributions and the dynamics of the environment are known, the optimal set of actions can be computed through dynamic programming \cite{bertsekas1995dynamic}. Under more general settings where the agent does not have perfect information regarding the environment, \cite{chang2005adaptive} proposed an adaptive algorithm based on a Multi-Armed Bandit (MAB) model and Upper Confidence Bound (UCB) \cite{auer2002finite}. 
\cite{kocsis2006bandit} and \cite{coulom2007efficient} applied UCB to tree search, and \cite{coulom2007efficient} invented the term Monte Carlo Tree Search (MCTS) and used it in a Go-playing program for the first time. Since then, MCTS has been developed extensively and applied to various games such as Othello \cite{hingston2007experiments} and Go \cite{silver2016mastering}.  To deal with different types of problems, several variations of MCTS have been introduced, e.g., Flat UCB (and its extension Bandit Algorithm for Smooth Trees) \cite{coquelin2007bandit} and Single-Player MCTS (for single-player games) \cite{schadd2011selective}.\par 

However, most  bandit-based MCTS algorithms are designed to minimize regret (or maximize the cumulative reward of the agent), whereas in many situations, the goal of the agent may be to efficiently determine the optimal set of actions within a limited sampling budget. To the best of our knowledge, there is limited effort in the literature that aims at addressing the latter problem. \cite{teraoka2014efficient} first incorporated Best Arm Identification (BAI) into MCTS for a MIN-MAX game tree, and provided upper bounds of play-outs under different settings. 
\cite{kaufmann2017monte} had an objective similar to \cite{teraoka2014efficient}, but with a tighter bound. Their tree selection policy selects the node with largest confidence interval, which can be seen as choosing the node with the highest variance. In some sense, this is a pure exploration policy and would not efficiently use the limited sampling budget. In our work, we are motivated to establish a tree policy that intelligently balances exploration and exploitation (analogous to the objective of UCB). The algorithms developed in \cite{teraoka2014efficient} and \cite{kaufmann2017monte} are only for MIN-MAX game trees, whereas our new tree policy can be applied to more general types of tree search problems. The MCTS algorithm in \cite{grill2016blazing} is more general than \cite{teraoka2014efficient} and \cite{kaufmann2017monte}, but its goal is to estimate the maximum expected cumulative reward at the root node, whereas we focus on identifying the optimal action.  \par

Algorithms that focus on minimizing regret tend to discourage exploration. This tendency can be seen in two ways. Suppose at some point an action was performed and received a small reward. To minimize regret, the algorithm would be discouraged from taking this action again. However, the small reward could be due to the randomness in the reward distribution. Mathematically, \cite{lai1985asymptotically} showed that for MAB algorithms, the number of times the optimal action is taken is exponentially more than sub-optimal ones, which makes sense when the objective is to maximize the cumulative reward, since the exploration of other actions is highly discouraged. This leads to our second motivation: is there a tree policy that explores sub-optimal actions more to ensure the optimal action is found?\par

Apart from the lack of exploration as a result of the underlying MAB model's objective to minimize regret or maximize cumulative reward, most MCTS algorithms assume that the support of the reward distribution is bounded and known (typically assumed to be \([0,1]\)). With the support of reward distribution being known, the parameter in the upper confidence term in UCB is tuned or the reward is normalized. However, a general tree search problem may likely have an unknown and practically unbounded range of rewards. In such case, assuming a range can lead to very poor performance. Therefore, the third motivation of our research is to relax the known reward support assumption.\par 

To tackle the challenge in balancing exploration and exploitation with a limited sampling budget for a tree policy, we model the tree selection problem at each stage as a statistical {\em Ranking \& Selection} (R\&S) problem and propose a new tree policy for MCTS based on an adaptive algorithm from the R\&S community. Similar to the MAB problem, R\&S assumes that we are given a set of bandit machines (often referred to as alternatives in the R\&S literature) with unknown reward distributions, and the goal is to select the machine with the highest mean reward. Specifically, we will develop an MCTS tree policy based on the Optimal Computing Budget Allocation (OCBA) framework \cite{chen2000simulation}. OCBA was first proposed in \cite{chen1995effective}, and aims at maximizing the probability of correctly selecting the action with highest mean reward using limited sampling budget. More recent developments of OCBA include addressing multiple objectives \cite{lee2004optimal} and subset selection \cite{chen2008efficient,zhang2016simulation}.\par  

The objective of the proposed OCBA tree policy is to maximize the Approximate Probability of Correct Selection (APCS), which is a lower bound on the probability of correctly selecting the optimal action at each node.
Intuitively, the objective function of the new OCBA tree selection policy would lead to an optimal balance between exploration and exploitation with a limited sampling budget, and thus help address the drawbacks of existing work that either pursues pure exploration \cite{teraoka2014efficient,kaufmann2017monte} or exponentially discourages exploration \cite{lai1985asymptotically}. Our new OCBA tree policy also removes the known and bounded support assumption for the reward distribution, because the new OCBA policy determines the sampling allocation based on the posterior distribution of each action, which is updated adaptively according to samples.\par

To summarize, contributions of this paper include the following:
\begin{enumerate}
	\item We propose a new tree policy for MCTS with an objective to maximize APCS with a limited sampling budget. The new tree policy optimally balances exploration and exploitation to efficiently select the optimal action. The new OCBA tree selection policy also relaxes the assumption of known bounded support on the reward distribution.
	\item We present a sequential algorithm to implement the new OCBA tree policy that maximizes the APCS at each sampling stage and prove that our algorithm converges to the optimal action.
	\item We provide theoretical analyses, such as convergence guarantee, proof of optimality of the proposed algorithm, and the exploration-exploitation trade-off of the proposed algorithm, which works differently than bandit-based algorithms, and is more suitable for identifying the best action.
	\item We demonstrate the efficiency of our algorithm through numerical experiments.
\end{enumerate}\par

\begin{remark}
	In much of the computer science/artificial intelligence literature, an algorithm that focuses on determining the optimal set of actions under a limited budget is defined as a pure exploration algorithm (see, e.g., \cite{shleyfman2015interruptible,chen2014combinatorial,bubeck2009pure}), whereas we view such algorithms as retaining a balance between exploration and exploitation, as the analysis in \Cref{sec:algorithm_description} shows. In statistical R\&S,
	pure exploration algorithms generally implies sampling based primarily on the variance of each action, which often leads to sampling suboptimal actions more. 
	It will be clearer in the \Cref{sec:simulation} where we show that OCBA-MCTS actually samples less those highly suboptimal actions and ``exploits" those potential actions more.
\end{remark}
The rest of the paper is organized as follows. We present the problem formulation in \Cref{sec:background}, and review the proposed OCBA-MCTS algorithm in Section \ref{sec:algorithm_description}. Theoretical analyses, including convergence theorems and exploration-exploitation analysis, are carried out in \Cref{sec:analysis}. Proofs are given in the Appendix. Numerical examples are presented in Section \ref{sec:simulation} to evaluate the performance of our algorithm. \Cref{sec:conclusion} concludes the paper and points to future research directions. \par

A preliminary version of this work was presented in \cite{li2019monte}, where a simpler tree policy (not employing the node representation adopted in the current work) was used. In addition to improving the efficiency of the MCTS algorithm, here we prove that our proposed algorithm converges asymptotically to the optimal action, and provide an exploration-exploitation trade-off analysis, both analytically and through a more comprehensive set of numerical experiments. \par  

\section{Problem formulation}\label{sec:background}
Consider a finite horizon MDP \(M = (X,A,P,R)\) with horizon length $H$, finite state space \(X\), finite action space \(A\) with \(|A|>1\), bounded reward function \(R = \{R_t, t=0,1,\dots H \}\) such that \(R_t\) maps a state-action pair to a random variable (r.v.),
and transition function \(P = \{P_t, t=0,1,\dots H \}\) such that \(P_t\) maps a state-action pair to a probability distribution over \(X\). We assume that \(P_t\) is unknown and/or \(|X|\) and \(|A|\) are very large, and hence it is not feasible to solve the problem by dynamic programming. Further define $X_a$ and $A_x$ as the available child states when taking action $a$ and available actions at state $x$, respectively. Denote by \(P_t(x,a)(y)\) the probability of transitioning to state \(y\in X_a\) from state \(x \in X\) when taking action \(a\in A_x \) in stage $t$, and $R_t(x, a)$ the reward in stage $t$ by taking action $a$ in state $x$.
Let \(\Pi\) be the set of all possible nonstationary Markovian policies \(\pi = \{\pi_i|\pi_i:X\rightarrow A, i\ge0\}\).

Bandit-based algorithms for MDPs seek to minimize the expected cumulative regret,
whereas our objective is to identify the best action that leads to maximum total expected reward given by $\mathbb{E}\big[\sum_{t = 0}^{H-1} R_t(x_t,\pi_t(x_t)) \big]$ for given $x_0 \in X$. We first define the optimal reward-to-go value function for state \(x\) in stage \(i\) by 
\begin{align}\label{eq:optimal-to-go}
V_i^*(x) =
\max_{\pi\in\Pi}\mathbb{E}\big[\sum_{t = i}^{H-1} R_t(x_t,\pi_t(x_t)) \big| x_i = x \big], \nonumber \\
~ i = 0,1,\dots,H-1
\end{align}
with \(V_H^*(x) = 0\) for all \(x\in X\). Also define 
\begin{align*}
Q_i(x,a) = \mathbb{E}[R(x,a)] + \sum_{y\in X_a}P_t(x,a)(y)V_{i+1}^*(y),
\end{align*}
with \(Q_H(x,a) = 0\). It is well known \cite{bertsekas1995dynamic} that eq. (\ref{eq:optimal-to-go}) can be written via the standard Bellman optimality equation:
\begin{align*}\label{eq:optimal-to-go_Bellman}
V_i^*(x) &=
\max_{a \in A_x} (\mathbb{E}[R_i(x,a)] + \mathbb{E}_{P_t(x,a)}V_{i+1}^*(Y)), ~  \\
&= 	\max_{a \in A_x} (\mathbb{E}[R_i(x,a)] + \sum_{y\in X_a}P_t(x,a)(y)V_{i+1}^*(y) )\\
&= \max_{a \in A_x}(Q_i(x,a)), ~ i = 0,1,\dots,H-1,
\end{align*}
where \(Y\sim P_i(x,a)(\cdot)\) represents the random next state.\par 

Since we are considering a tree search problem, some additional notation and definitions beyond MDP settings are needed. Define a state node by a tuple that contains the state and the stage number:
\begin{align*}
    &\bx = (x, i) \in \bX \\
    &\forall x \in X, ~ 0 \le i \le H,
\end{align*}
where $\bX$ is the set of state nodes.
Similarly, we define a state-action node by a tuple of state, stage number and action (i.e., a state node followed by an action):
\begin{align*}
    & \ba = (\bx, a) = (x, i, a),\\
    &\forall x \in X, ~ 0 \le i \le H, ~ a \in A_{x},
\end{align*}

Now, we can rewrite the immediate reward function, value function for state, state-action pair with state node and state-action node and state transition distribution, respectively, by 
\begin{align*}
    R(\ba) &= R(\bx, a)  :=  R_i(x, a)\\
    V^*(\bx) &:= V^*_i(x), \\
    Q(\ba) &= Q(\bx, a) := Q_i(x, a)\\ 
    P(\ba) &= P(\bx, a) := P_i(x, a).
\end{align*}
Similarly, $V^*(\bx)$ and $Q(\bx, a)$ are assumed to be zero for all terminal state nodes $\bx$.
To make our presentation clearer, we adopt the following definitions based on nodes: define $N(\bx)$ and $N(\bx, a)$ the number of visits to node $\bx$ and $(\bx, a)$, respectively, $\bX_\ba$ the set of child state nodes given parent nodes, and $A_{\bx}$ the set of available child actions at node $\bx$, respectively. 

Traditionally, MCTS algorithms aim at estimating \(V^*(\bx)\) and model the selection process in each stage as an MAB problem, i.e., view  \(Q(\bx, a)\) as a set of bandit machines where $(\bx, a)$ are child state-action nodes of $\bx$ (\cite{chang2005adaptive,kocsis2006bandit}), and minimize the {\it regret}, namely,
\begin{align*}
\min_{a_1, \dots, a_N \in A_{\bx}} &\{ N \max_{a \in A_{\bx}}(Q(\bx, a))- \sum_{k = 1}^{N} Q(\bx, a_k)  \}\\
&=\{ N V^*(\bx)- \sum_{k = 1}^{N} Q(\bx, a_k)  \}
\end{align*}
for $\bx$ in stage $1,2,\dots H$,
where $N$ and $a_k$ are the number of rollouts/simulations (also known as total sampling budget in much of Ranking \& Selection literature) and the $k$-th action sampled at state node $\bx$ by the tree policy, respectively. The meaning of rollout will be clearer in \Cref{sec:algorithm_description}.
In this paper, our goal is to identify the optimal action that achieves the highest cumulative reward at the root with initial state \(x\), that is, find
\begin{align*}
a_{\bx_0}^* = \arg \max_{a \in A_{\bx_0}} Q(\bx_0,a),
\end{align*}
where the root state node $\bx_0 = (x, 0)$.
Let \(\hat{Q}(\bx, a) =R(\bx, a) + V^*(\by) \) be the random cumulative reward by taking action \(a\) at state node \(\bx\), where $\by$ is the random state node reached.
Clearly, \(\hat{Q}(\bx, a)\) is a random variable. We assume \(\hat{Q}(\bx, a)\) is normally distributed with known variance, and its mean \(\mu(\bx, a)\) has a conjugate normal prior with a mean equals \(Q(\bx, a)\). Hence we have
\begin{align*}
Q(\bx, a) = \mathbb{E}[\mathbb{E}[\hat{Q}(\bx, a)|\mu(\bx, a)]].
\end{align*}

\begin{remark}
	For our derivations, we assume the variance of the sampling distribution of $\hat{Q}(x,a)$ is known; however, in practice, the prior variance may be unknown, in which case estimates such as the sample variance are used \cite{chen2010stochastic}.
\end{remark}
Consider the non-informative case, i.e., the prior mean $Q(\bx, a)$ is unknown, it can be shown that \cite{degroot2005optimal} the posterior of \(\mu(\bx, a)\) given observations (i.e., samples) is also normal. 
For convenience, define the \(t\)-th sample by \(\hat{Q}^t(\bx, a)\).  Then the conditional distribution of \(\mu(\bx, a)\) given the set of samples \((\hat{Q}^1(\bx, a), \hat{Q}^2(\bx, a),\dots, \hat{Q}^{N(\bx, a)}(\bx, a))\) is 
\begin{equation}\label{eq:posterior}
\tilde{Q}(\bx, a) \sim N(\bar{Q}(\bx, a) , \frac{\sigma^2(\bx, a)}{N(\bx, a)}),
\end{equation}
where
\begin{align*}
\bar{Q}(\bx, a) &= \frac{1}{N(\bx, a)}\sum_{t = 1}^{N(\bx, a)}\hat{Q}^t(\bx, a),\\
\tilde{Q}(\bx, a) &= \mu(\bx, a)|(\hat{Q}^1(\bx, a), \hat{Q}^2(\bx, a),\dots, \hat{Q}^{N(x,a)}(\bx, a)),
\end{align*}
and \(\sigma^2(\bx, a)\) is the variance of \(\hat{Q}(\bx, a)\) and can be approximated by the sample variance:
\begin{align*}
\hat{\sigma}^2(\bx, a) &=  \frac{1}{N(\bx, a)} \sum_{t = 1}^{N(\bx, a)}\big(\hat{Q}^t(\bx, a) - \bar{Q}(\bx, a)\big)^2.
\end{align*}

\begin{remark}
	If the samples of \(Q(\bx, a)\) are not normally distributed, the normal assumption can be justified by batch sampling and the central limit theorem.
\end{remark}
Under these settings, our objective is to maximize the Probability of Correct Selection (PCS) defined by
\begin{align}
PCS &= P\bigg[ \bigcap_{a\in A, a\ne \hat{a}_{\bx}^*} (\tilde{Q}(\bx,\hat{a}_{\bx}^*) \ge \tilde{Q}(\bx, a) ) \bigg]
\end{align}
for a state node $\bx$, where \(\hat{a}_{\bx}^*\) is the action that achieves the highest mean sample \(Q\)-value at such node, i.e., \(\hat{a}_{\bx}^* =\arg \max_{a\in A_{\bx}} \bar{Q}(\bx, a)\). \par

PCS is hard to compute because of the intersections in the (joint) probability. We seek to simplify the joint probability by changing the intersections to sums using the Bonferroni inequality to make the problem tractable.
By the Bonferroni inequality, PCS is lower bounded by the Approximate Probability of Correct Selection (APCS), that is,
\begin{align}\label{eq:APCS}
PCS \ge& 1-\sum_{a\in A_{\bx}, a\ne \hat{a}_{\bx}^*} P\bigg[ \tilde{Q}(\bx,\hat{a}_{\bx}^*) \le \tilde{Q}(\bx, a) \bigg]\\
=\vcentcolon& APCS. \nonumber
\end{align} \par 
The objective of our new tree policy is to maximize APCS as given in \Cref{eq:APCS}. Compared to MAB's objective of minimizing the expected cumulative regret, this objective function will result in an allocation of sampling budget to alternative actions in a way that optimally balances exploration and exploitation. This objective function is motivated by the OCBA algorithm \cite{chen2000simulation} in the R\&S literature. We will present and analyze our OCBA tree policy in the following sections. 

\section{Algorithm description}\label{sec:algorithm_description}

In this section, we first briefly describe the main four phases, i.e., {\it selection}, {\it expansion}, {\it simulation} and {\it backpropagation}, in an MCTS algorithm. Then, we propose a novel tree policy in the selection stage that aims at finding the optimal action at each state node. 

\subsection{Canonical MCTS algorithm}\label{sec:canonical_MCTS}
Here we briefly summarize the four phases in a typical MCTS algorithm. We refer readers to \cite{browne2012survey} for a complete illustration of these phases. 
Algorithm \ref{alg:MCTS} represents a canonical MCTS, with detailed descriptions of the main phases below.

\subsubsection{Selection}
In this phase, the algorithm will navigate down the tree from the root state node to an expandable node, i.e., a node with unvisited child nodes. We assume that expansion is automatically followed when a state-action is encountered. Therefore, when determining the path down, there are three possible situations:
\begin{enumerate}[(i)]
    \item If a state-action node is encountered (denoted by $(\bx, a)$), we will land into a new state node $\by$ which is obtained by calling the expansion function. Then, we continue with the selection algorithm.
    
    \item If an expandable state node (which could be a leaf node) is encountered, we call the expansion function to add a new child state-action node and a state node (by automatically expanding the state-action node) to the path. Then, we stop the selection phase and return the path from the root to this state node. Finally, we proceed with the simulation and backpropagation phase
    \item If an unexpandable state node is encountered (denoted by $\bx$), we employ a {\it tree policy} to determine which child action to sample. Then we enter the new state-action node $(\bx, a)$ and continue the selection algorithm with this state-action node. The tree policies can be briefly categorized into two types: deterministic, such as UCB1 and several of its variants (e.g., UCB-tuned, UCB-E), and stochastic, such as $\epsilon$-greedy and EXP3; see \cite{browne2012survey} for a review. 
    
\end{enumerate}

\subsubsection{Expansion}
In this phase, a random child state or state-action node of the given node is added. 
If the incoming node is a state node $\bx$, the next node is selected randomly (usually uniform) from those unvisited child state-action nodes.
If the incoming node is a state-action node $(\bx, a)$, the subsequent state node is found by simply sampling from distribution $P(\bx, a)(\cdot)$.

\subsubsection{Simulation}
In some literature, this phase is also known as ``rollout". The simulation phase starts with a state node. The purpose of this step is to simulate a path from this node to a terminal node and produce a sample of cumulative reward by taking this path (which is a sample of the value for this node). The simulated path is taken by a {\it default policy}, which is usually sample the feasible child sate-action nodes uniformly. With this node's value sample, we may proceed to the backpropagation phase. 

\subsubsection{Backpropagation}
This phase simply takes the simulated node value and update the values of the nodes in the path (obtained in selection step) backward.

In the next section, we will propose our tree policy based on OCBA and illustrate the detailed implementations of the four phases.

\subsection{OCBA selection algorithm}
We now present an efficient tree policy to estimate the optimal actions in every state node by estimating \(V^*(\bx)\) and \(Q(\bx, a)\) for all possible \(a \in A_{\bx}\) at the state node. 
Denote the estimates of \({V}^*(\bx)\) at node \(\bx\) by \(\hat{V}^*(\bx)\), which is initialized to 0 for all state nodes. Our algorithm estimates \(Q(\bx, a)\) for each action \(a\) by its sample mean, and selects the action that maximizes the sample mean as \(\hat{a}_{\bx}^*\). During the process, the estimate of \(Q(\bx, a)\) is given by \Cref{eq:posterior} and the proposed new OCBA tree policy is applied. Our algorithm follows the algorithmic framework described in \Cref{sec:canonical_MCTS}, with the tree policy changed to OCBA and other mild modifications. 

The structure of the proposed OCBA-MCTS algorithm is shown in Algorithms \ref{alg:MCTS} to \ref{alg:backpropagate}. There are two major characteristics: the first is to use the proposed OCBA algorithm for the tree policy. The second is to require each state-action node to be expanded $n_0 > 1$ times, because we need a sample variance for each state-action node, which will become clearer after the tree policy illustration. The process is run for a prespecified \(N\) times (which will be later referred to as number of rollouts or sampling budget) from the root state node $\bx_0$, after which a partially expanded tree is obtained and the optimal action \(\hat{a}_{\bx_0}^*\) can be derived.\par 

When steering down the tree and a state node $\bx$ is visited, the selection phase, which is illustrated in Algorithm \ref{alg:selection}, will first determine if there is a child state-action node that was visited for less than $n_0$ times at the given state node. If there is, then the state-action node will be sampled and added to the path. In other words, we try to expand each state node when it is visited, and require each node to be expanded $n_0$ times. If all the state-action nodes are well-expanded, Algorithm \ref{alg:selection} will call Algorithm \ref{alg:OCBAselection} (OCBASelection), which calculates the allocation of samples to child state-action nodes of the current state node for a total sampling budget \(\sum_{a \in A} N(\bx, a) +1 \). To determine the number of samples allocated to each state-action node, denoted by \((\tilde{N}(\bx,a_1),\tilde{N}(\bx, a_2),\dots,\tilde{N}(\bx, a_{|A_{\bx}|}))\) (where $a_i \in A_{\bx}, i = 1,\dots,|A_{\bx}|$), the OCBA tree policy first identifies the child state-action node with the largest sample mean (sample optimal) and finds the difference between the sample means of the sample optimum and all other nodes:
\begin{align*}
\hat{a}_{\bx}^* &:= \arg\max_a \bar{Q}(\bx, a)\\
\delta_{\bx}(\hat{a}_{\bx}^*,a) &:=\bar{Q}(\bx, \hat{a}_{\bx}^*)-\bar{Q}(\bx, a), ~\forall a \ne \hat{a}_{\bx}^*.
\end{align*}
The set of allocations \((\tilde{N}(\bx, a_1),\tilde{N}(\bx, a_2),\dots,\tilde{N}(\bx, a_{|A|}))\) {that maximizes APCS} can be obtained by solving the following set of equations:

\begin{align}
\frac{\tilde{N}(\bx,a_{n+1})}{\tilde{N}(\bx, a_n)} =& 
\Bigg(\frac{\sigma(\bx, a_{n+1})/\delta_{\bx}(\hat{a}_{\bx}^*,a_{n+1})  } {\sigma(\bx,  a_n)/\delta_{\bx}(\hat{a}_{\bx}^*,a_n) }\Bigg)^2, \nonumber \\
~ &\forall  a_n, a_{n+1} \ne \hat{a}_{\bx}^*, ~ a_n, a_{n+1} \in A_{\bx}, \label{eq:budget_allocation1}\\
\tilde{N}(\bx, \hat{a}_{\bx}^*) =& \sigma(\bx, \hat{a}^*_{\bx})\sqrt{\sum_{a \in A, a\ne \hat{a}_{\bx}^*}  \frac{(\tilde{N}(\bx, a))^2}{\sigma^2(\bx, a) } }, \label{eq:budget_allocation2}\\
\sum_{a \in A}\tilde{N}(\bx, a) =& \sum_{a \in A}N(\bx, a) +1 \label{eq:budget_allocation3}.
\end{align}	
The derivations of \Crefrange{eq:budget_allocation1}{eq:budget_allocation3} are illustrated in the appendix.\par 

After the new budget allocation is computed, the algorithm will select the ``most starving'' action to sample \cite{chen2010stochastic}, i.e., sample 
\begin{align}\label{eq:select_policy}
\hat{a} = \arg\max_{a \in A_{\bx}}(\tilde{N}(\bx, a) - N(\bx, a) ).
\end{align} 

We highlight some major modifications to the canonical MCTS in the proposed algorithm. First, in the selection phase, we will try to expand all ``expandable" nodes visited when obtaining a path to leaf. Since the variances of the values of a state node's child nodes are required in the proposed tree policy, we define a state node as expandable if it has child nodes that are visited less than $n_0 > 1$ times. State-action nodes are always expandable.\par 

At the expansion phase as shown in Algorithm \ref{alg:expand}, a state-action node is expanded by simply sampling the transition distribution \(P(\bx, a)(\cdot)\), and the resulting state node is subsequently added to the path. The reward by taking the action in the state node is also recorded and will be used in the backpropagation stage.\par 

In the simulation and backpropagation phases illustrated in Algorithm \ref{alg:simulate} and \ref{alg:backpropagate}, a leaf-to-terminal path is simulated, and its reward is used to update the value for the leaf node. If we denote the leaf node and the reward from the simulated path by $\bx_l$ and $r$, respectively, the leaf node value estimate is updated by
\begin{align}\label{eq:bp_leaf}
    \hat{V}^*(\bx_l) \leftarrow \frac{N(\bx_l) - 1}{N(\bx_l)} \hat{V}^*(\bx_l) + \frac{1}{N(\bx_l)} r.
\end{align}
After updating the leaf state node, we update the nodes in the path collected in selection stage in reversed order. Suppose we have a path 
\begin{align*}
    (\bx_0, (\bx_0, a_0), \dots, \bx_i, (\bx_i, a_i), \bx_{i+1}, \dots, \bx_l)
\end{align*}
and the node values of $\bx_{i+1}, \dots, \bx_l$ have been updated, the preceding nodes $\bx_i$ and $(\bx_i, a_i)$ are updated through
\begin{align}
    \hat{Q}^{N(\bx, a)}(\bx_i, a_i) &= R(\bx_i, a) + \hat{V}^*(\bx_{i+1}), \label{eq:bp1} \\
    \bar{Q}(\bx_i, a_i) &\leftarrow \frac{N(\bx_i, a_i) - 1}{N(\bx_i, a_i)} \bar{Q}(\bx_i, a_i) + \frac{1}{N(\bx_i, a_i)} Q^{N(\bx, a)}(\bx_i, a_i), \label{eq:bp2}\\
    \bar{V}(\bx_i) &\leftarrow \frac{N(\bx_i) - 1}{N(\bx_i)} \bar{V}^*(\bx_i) + \frac{1}{N(\bx_i)} \bar{Q}(\bx_i, a_i), \label{eq:bp3}\\
    \hat{V}(\bx_i) & \leftarrow (1-\alpha_{N(\bx_i)}) \bar{V}(\bx_i) + \alpha_{N(\bx_i)} \max_{a\in A_{\bx_i}} \bar{Q}(\bx_i, a), \label{eq:bp4}
\end{align}
where $\bar{V}(\cdot)$ is an intermediate variable that records the average value of the node through the root-to-leaf path, and $\alpha_{N(\bx_i)} \in [0,1]$ is a smoothing parameter. The updates are performed backwards to the root node.

Details of the OCBA tree policy are shown in Algorithm \ref{alg:MCTS} to \ref{alg:backpropagate}.\par

\begin{algorithm}
	\DontPrintSemicolon
	\KwIn{Simulation budget (roll-out number) \(N\), root state node \(\bx_0\)}
	\KwOut{\(\hat{a}_{\bx_0}^*\), \(\hat{V}^*(\bx_0)\)}
	Set simulation counter \({n}\leftarrow 0\)\;
	\While{\(n < N \)}
	{
		$path \leftarrow selection(x_0)$ \;
		$leaf \leftarrow path[end]$\;
		$ r \leftarrow simulate(leaf)$\;
		$backpropagate(path, r)$\;
		$n \leftarrow n+1$
	}
	return action \(\hat{a}_{\bx_0}^* = \arg \max_{a \in A} \bar{Q}(\bx_0, a)\) \;
	\caption{MCTS}
	\label{alg:MCTS}
\end{algorithm}

\begin{algorithm}
	\DontPrintSemicolon
	\KwIn{root state node \(\bx_0\)}
	Sample a root-to-leaf path.\;
	 $path \leftarrow (~)$ \;
	$\bx \leftarrow \bx_0$\;
	\While{True}
	{
		Append state node $\bx$ to $path$\;
	    $N(\bx) \leftarrow N(\bx) + 1$\;
		\If{$\bx$ is a terminal node}
		{
		    return $path$\;
		}
		
		\eIf{$\bx$ is expandable}
		{
		    $\hat{a} \leftarrow expand(\bx)$\;
		    $\by \leftarrow expand((\bx, \hat{a}))$\;
		    Append state-action node $(\bx, \hat{a})$ and leaf state node $\by$ to $path$\;
	        $N(\bx, a) \leftarrow N(\bx, a) + 1$\;
	        $N(\bx) \leftarrow N(\bx) + 1$\;
		    return $path$\;
		}
		{
		    $\hat{a} \leftarrow OCBAselection(\bx)$\;
		    Append state-action node $(\bx,\hat{a})$ to $path$\;
	        $N(\bx, \hat{a}) \leftarrow N(\bx, a) + 1$\;
		    $\bx \leftarrow expand((\bx,\hat{a}))$\;
		}
	}
	\caption{\(selection(\bx_0)\)}
	\label{alg:selection}
\end{algorithm}

\begin{algorithm}
	\DontPrintSemicolon
	\KwIn{state node \(\bx\)}
    Identify \(\hat{a}_{\bx}^*= \arg\max_a \bar{Q}(\bx, a)\)\;
	\(\delta_{\bx}(\hat{a}_{\bx}^*,a)  \leftarrow \bar{Q}(\bx,\hat{a}_{\bx}^*)-\bar{Q}(\bx, a)\)\;
	Compute new sampling allocation
	\((\tilde{N}(\bx, a_1),\tilde{N}(\bx, a_2),\dots,\tilde{N}(\bx,a_{|A|}))\)\\
	by solving \Crefrange{eq:budget_allocation1}{eq:budget_allocation3}\;
	\(\hat{a} \leftarrow \arg\max_{a \in A}(\tilde{N}(\bx, a) - N(\bx, a) )\)\;
	return $\hat{a}$\;
	\caption{\(OCBASelection(\bx)\)}
	\label{alg:OCBAselection}
\end{algorithm}

\begin{algorithm}
	\DontPrintSemicolon
	\KwIn{a state node $\bx$ {\it or} a state-action node $(\bx, a)$}
	\KwOut{child node to be added to the tree}
	\eIf{the input node is a state node $\bx$}
	{
	    $S \leftarrow$ \{feasible actions of state $x$ that has been sampled less than $n_0$ times\}\; 
	    $\hat{a} \leftarrow $ random choice of $S$\;
	    Add $(\bx, \hat{a})$ to the tree if it is unvisited\;
	    return $\hat{a}$\;
	}
	{
	    Sample node $(\bx, a)$ at state node $\bx$ and obtain the child state node \(\by \sim P(\bx, a)(\cdot)\)\;
	    Add $\by$ to the tree if it is unvisited\;
	    return $\by$.
	    
	}
	\caption{\(expand(\bx~ or ~(\bx, a))\)}
	\label{alg:expand}
\end{algorithm}

\begin{algorithm}
	\DontPrintSemicolon
	\KwIn{state node \(\bx\)}
	$r \leftarrow 0$\;
	\While{True}
	{
	    \eIf{$\bx$ is not terminal}
	    {
	        find a random child state-action node $(\bx, a)$ of $\bx$\;
	        $r \leftarrow r + R(\bx, a)$\;
	        sample $a$ and obtain the child state node \(\by \sim P(\bx, a)(\cdot)\)\;
	        $\bx \leftarrow \by$\;
	    }
	    {
	        return $r$\;
	    }
	}
	\caption{\(simulate(\bx)\)}
	\label{alg:simulate}
\end{algorithm}

\begin{algorithm}
	\DontPrintSemicolon
	\KwIn{path to a leaf node \(path\), simulated reward $reward$}
	\For{node in reversed(path)}
	{
	   Update node values through \Crefrange{eq:bp_leaf}{eq:bp4}.
	}
	\caption{\(backpropagate(path, reward)\)}
	\label{alg:backpropagate}
\end{algorithm}

There are a few points worth emphasizing in Algorithm \ref{alg:OCBAselection}. First, $\tilde{N}(\bx, a_i)$ is the total number of samples for each action $i$ after the allocation. Given present information, i.e., all samples state node \(\bx\), OCBA-MCTS assumes now a total number of \(\sum_{a \in A} N(\bx, a) + 1\) samples available. By solving \Crefrange{eq:budget_allocation1}{eq:budget_allocation3}, the new budget allocation 
\((\tilde{N}(\bx, a_1),\tilde{N}(\bx, a_2),\dots,\tilde{N}(\bx,a_{|A|}))\) 
that maximizes APCS is calculated. Afterwards, one action based on \Cref{eq:select_policy} is selected to sample and move to the next stage. This ``most-starving'' implementation of the OCBA policy as given in Algorithm \ref{alg:OCBAselection} is fully sequential, as each iteration allocates only one sample to an action before the allocation decision is recomputed. It is also possible to allocate the sampling budget in a batch of size $\Delta>1$. We use the ``most-starving'' scheme, because it has been shown to be more efficient than the batch sampling scheme \cite{chen2006efficient}. However, the benefit of sampling in batches for MCTS is that in one iteration, multiple root-to-leaf paths can be examined, enabling parallelization of the algorithm. We will consider this in future research. \par 

Second, updating $\hat{V}(\bx_i)$ involves two stages: updating the value estimate along the path (\Cref{eq:bp3}) and taking the maximum over the values of the child state-action nodes (canonical way to update). Then the two values are mixed through $\alpha_{N(\bx_i)}$ to update $\hat{V}(\bx_i)$, as prior research (e.g., \cite{jiang2017monte,coulom2007efficient}) suggests mixing with $\alpha_{N(\bx_i)} \rightarrow 1$ (i.e., asymptotically achieves Bellman update) ensures more stable updates.\par 

Finally, although we present our algorithm in the context of solving an MDP, it can be applied to other tree structures such as MIN-MAX game trees or more general game trees, by setting the reward function and the \(\max\) and \(\min\) operators accordingly. \par 

\section{Analysis of OCBA-MCTS}\label{sec:analysis}
In this section, we first analyze how the OCBA tree policy in OCBA-MCTS balances exploration and exploitation mathematically.  Then, we present several theoretical results regarding OCBA-MCTS. The proofs are given in the appendix.

\subsection{Exploration-exploitation balance}
\Crefrange{eq:budget_allocation1}{eq:budget_allocation3} determine the new sampling budget allocation. First, \cref{eq:budget_allocation1} shows that the sub-optimal state-action nodes should be sampled proportional to their variances and inversely proportional to the squared differences between their sample means and that of the optimal state-action node. This represents a different type of trade off between exploration (sampling actions with high variances) and exploitation (sampling actions with higher sample means) compared to bandit-based algorithms.\par 


\subsection{Convergence analysis}
In this part, we present three theorems regarding OCBA-MCTS. The first theorem ensures the estimate of the value-to-go function converges to the true value. The second theorem proves that OCBA-MCTS will select the correct action, i.e., the PCS converges to 1. The last theorem guarantees that the APCS, which is a lower bound of PCS, is maximized by solving \Cref{eq:budget_allocation1,eq:budget_allocation2} in each step. It is shown that at each point of the tree policy when a decision needs to be made, the action that maximizes the APCS will be selected and sampled. Therefore, the OCBA tree policy gradually maximizes the overall APCS at the root, which is a lower bound for PCS.
\begin{theorem} [Asymptotic consistency]\label{thm:asym_unbias}
	Assume the expected cumulative reward at state-action node $(\bx, a)$ is a normal random variable with mean \(\mu(\bx, a)\) and variance \(\sigma^2(\bx, a)<\infty\), i.e., \(\hat{Q}(\bx, a) \sim N(\mu(\bx, a), \sigma^2(\bx, a) )\) for \(0 \le i < H\). Further assume \(\mu(\bx, a)\) is also normally distributed with unknown mean and known variance. Suppose the proposed OCBA-MCTS algorithm is run with a sampling budget \(N\) at root state node \(\bx_0\). Then at any subsequent nodes $\bx$,
	\begin{align*}
	\lim_{N\rightarrow\infty} \bar{Q}(\bx, a) &= \mathbb{E}[\hat{Q}(\bx, a)] = Q(\bx, a) ,\\
	\lim_{N\rightarrow\infty} \hat{V}(\bx) &= V^*(\bx),~ \forall ~ \bx\in \bX, ~ (\bx, a)\in \bX \times A_{\bx}.
	\end{align*}
\end{theorem}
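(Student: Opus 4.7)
The strategy is to interleave two inductions: a \emph{forward} induction on tree depth, establishing that every reachable state-action node is sampled infinitely often as $N\to\infty$; and a \emph{backward} induction on the stage index $i$, propagating convergence of the value estimates from the terminal layer $H$ up to the root. At each layer the forward conclusion feeds SLLN-type arguments, and the backward conclusion feeds the Bellman-style update \Cref{eq:bp4}.

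For the forward step, at any state node $\bx$ that is itself visited infinitely often, I would argue the OCBA rule cannot starve any child action. Suppose for contradiction that $N(\bx,a_0)$ stays bounded while some $N(\bx,a_1)\to\infty$. The sample means $\bar Q(\bx,a)$ separate almost surely (by SLLN, under the natural assumption that the true $Q$-values are distinct; ties can be handled by a standard perturbation argument), so the right-hand side of \Cref{eq:budget_allocation1} stabilizes at a strictly positive finite constant and, together with \Cref{eq:budget_allocation3}, forces $\tilde N(\bx,a_0)\to\infty$; an analogous bound via \Cref{eq:budget_allocation2} handles the sample-optimal action. The ``most-starving'' rule \Cref{eq:select_policy} would then pick $a_0$ infinitely often, a contradiction. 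Combined with the $n_0$-sample warm-up for each newly expanded action, this yields $N(\bx,a)\to\infty$ for every $a\in A_{\bx}$. Since the root is visited $N$ times and an expanded state-action node produces each of its possible successors with positive probability under $P(\bx,a)(\cdot)$, infinite sampling propagates to every reachable state node.

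For the backward pass the base case $i=H$ is immediate because $\hat V^*(\bx)\equiv 0 \equiv V^*(\bx)$ at terminal nodes. For the inductive step, assume $\hat V^*(\by)\to V^*(\by)$ for every $\by$ at stage $i+1$. Each term in the running average defining $\bar Q(\bx,a)$ has the form $\hat Q^t(\bx,a)=R(\bx,a)+\hat V^*(\by_t)$ with $\by_t\sim P(\bx,a)(\cdot)$. By the inductive hypothesis, for any $\epsilon>0$ only finitely many samples satisfy $|\hat V^*(\by_t)-V^*(\by_t)|\geq\epsilon$, so the difference between $\bar Q(\bx,a)$ and the counterfactual average $\tfrac{1}{N(\bx,a)}\sum_t[R(\bx,a)+V^*(\by_t)]$ vanishes as $N(\bx,a)\to\infty$; applying the SLLN to the counterfactual samples yields $\bar Q(\bx,a)\to \mathbb{E}[R(\bx,a)]+\sum_{y}P(\bx,a)(y)V^*(y)=Q(\bx,a)$. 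Then $\max_a\bar Q(\bx,a)\to V^*(\bx)$, and the running average $\bar V(\bx)$ in \Cref{eq:bp3}, being dominated asymptotically by contributions from $a_t=\hat a^*_{\bx}$, tends to the same limit; combined with $\alpha_{N(\bx)}\to 1$, update \Cref{eq:bp4} gives $\hat V(\bx)\to V^*(\bx)$.

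\textbf{Main obstacle.} The delicate step is justifying $\bar Q(\bx,a)\to Q(\bx,a)$ when each sample in its running average depends on an evolving estimate $\hat V^*(\by_t)$ rather than on $V^*(\by_t)$, and when the OCBA allocation at $\bx$ is itself driven by quantities ($\delta_{\bx}$, and, in the sample-variance variant, $\hat\sigma$) that converge only with $N$. A clean resolution rests on a Stolz--Ces\`aro-type argument showing that only a vanishing fraction of the tail of the running average is corrupted, together with a verification that the recursive feedback between noisy estimates and the adaptive allocation does not create degenerate sampling patterns at any intermediate node.
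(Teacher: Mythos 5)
Your proposal follows essentially the same route as the paper's proof: infinite sampling of every action is extracted from the OCBA allocation equations (the paper does this inside its single-stage OCBA lemma, where the ratio and square-root relations force every sample count to infinity), and convergence is then propagated by backward induction from stage $H-1$ up to the root via a law-of-large-numbers step together with $\alpha_{N(\bx)}\rightarrow 1$. The non-i.i.d. issue you flag as the main obstacle is precisely the point the paper passes over by treating each interior node as a single-stage OCBA problem and invoking its LLN corollary, so your finitely-many-corrupted-samples argument only adds rigor to the same skeleton.
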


\begin{theorem}[Asymptotic correctness]\label{thm:asym_correct}
	Under the same assumptions of \Cref{thm:asym_unbias}, the PCS converges to 1 for any state node $\bx \in \bX$, i.e.,
	\begin{align*}
	P &\bigg[\bigcap_{a\in A_{\bx}, a\ne \hat{a}_{\bx}^*} (\lim_{N\rightarrow\infty}\tilde{Q}(\bx,\hat{a}_{\bx}^*)-\lim_{N\rightarrow\infty}\tilde{Q}(\bx, a)) \ge 0\bigg] = 1, \\ 
	&\forall ~ \bx\in \bX,
	\end{align*}	
	where \(\hat{a}_{\bx}^* = \arg \max_{a \in A_{\bx}} \bar{Q}(\bx, a)\).
\end{theorem}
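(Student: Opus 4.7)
The plan is to leverage \Cref{thm:asym_unbias} together with the posterior structure in \Cref{eq:posterior} to show that $\tilde Q(\bx,a)$ concentrates at the true value $Q(\bx,a)$ for every feasible action. Since $\tilde Q(\bx,a)\sim N(\bar Q(\bx,a),\sigma^2(\bx,a)/N(\bx,a))$, two ingredients suffice: the posterior center $\bar Q(\bx,a)\to Q(\bx,a)$ is already granted by \Cref{thm:asym_unbias}, and the posterior variance vanishes provided $N(\bx,a)\to\infty$ for every $a\in A_{\bx}$. With these two in hand, $\tilde Q(\bx,a)$ collapses onto the deterministic value $Q(\bx,a)$, and optimality of $a^*_{\bx}$ immediately yields $Q(\bx,a^*_{\bx})-Q(\bx,a)\ge 0$; intersecting the finitely many a.s.\ events over $a\ne a^*_{\bx}$ then gives the claim.

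The heart of the argument is therefore to show that every child action at every reachable state node is sampled infinitely often, which I would establish by contradiction and induction on tree depth. Fix $\bx$ and suppose $\limsup_N N(\bx,a)<\infty$ for some $a\in A_{\bx}$. Then the posterior variance of $a$ stays bounded below, while by \Cref{thm:asym_unbias} the gap $\delta_{\bx}(\hat a^*_{\bx},a)$ stabilizes at $Q(\bx,a^*_{\bx})-Q(\bx,a)>0$ (a unique-optimum assumption, or an explicit tie-breaking convention, is needed here because $\delta=0$ would make the ratios in \Cref{eq:budget_allocation1} degenerate). Consequently the target allocation $\tilde N(\bx,a)$ defined by \Crefrange{eq:budget_allocation1}{eq:budget_allocation3} grows proportionally to the total budget at $\bx$, so the deficit $\tilde N(\bx,a)-N(\bx,a)$ becomes the largest infinitely often, and the most-starving rule \Cref{eq:select_policy} selects $a$ arbitrarily often, contradicting the boundedness of $N(\bx,a)$. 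Applying the same argument inductively at deeper nodes, using the induction hypothesis that the parent state-action node is sampled infinitely often together with the transition distribution $P(\bx,a)$, propagates the conclusion to every reachable $\bx\in\bX$.

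Once $N(\bx,a)\to\infty$ for every $(\bx,a)$, \Cref{thm:asym_unbias} gives $\lim_{N\to\infty}\tilde Q(\bx,a)=Q(\bx,a)$ almost surely for each $a$, and $\hat a^*_{\bx}=\arg\max_a\bar Q(\bx,a)$ eventually coincides with $a^*_{\bx}$ almost surely. Taking the a.s.\ intersection of $Q(\bx,a^*_{\bx})\ge Q(\bx,a)$ over the finitely many suboptimal $a\in A_{\bx}$ yields the theorem. The main obstacle is the contradiction step establishing $N(\bx,a)\to\infty$: the OCBA allocation is characterized only implicitly through the ratio identities, and one must carefully argue that these ratios combined with the most-starving rule cannot leave any visit count bounded along a subsequence. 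A secondary subtlety is tie-breaking when two actions share the optimal $Q$-value, which I expect to handle either by a small-perturbation argument or by observing that the target statement is trivially satisfied for tied actions since their limiting $Q$-values coincide.
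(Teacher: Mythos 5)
Your overall strategy is the same as the paper's: reduce the theorem to the claim that every child action of every reachable node is sampled infinitely often, so that the posterior in \Cref{eq:posterior} degenerates at the true $Q$-value, and then conclude via the finitely many events $Q(\bx,a^*_{\bx})\ge Q(\bx,a)$. The paper packages this as a one-stage result (\Cref{lemma:OCBA_correct}): it argues that since the total budget tends to infinity some arm is sampled infinitely often, and then reads the infinite-visits property for all arms off the allocation identities \Crefrange{eq:budget_allocationOCBA1}{eq:budget_allocationOCBA2}, treating the realized counts as satisfying those identities; \Cref{thm:asym_correct} is then stated to follow directly, node by node. Your version instead works on the tree directly, by contradiction and induction on depth, and explicitly confronts the ``most-starving'' implementation \Cref{eq:select_policy} rather than identifying realized counts with the target allocation --- this is the step the paper glosses over, and the observation that the deficits sum to exactly one (by \Cref{eq:budget_allocation3}) is what makes your contradiction go through: a bounded arm whose target grows linearly must eventually carry the maximal deficit and be selected. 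Your attention to ties/unique optimum is also a point the paper does not address.

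There is, however, one step in your sketch that would fail as written: inside the contradiction hypothesis ($N(\bx,a)$ bounded) you invoke \Cref{thm:asym_unbias} to assert that $\delta_{\bx}(\hat a^*_{\bx},a)$ stabilizes at $Q(\bx,a^*_{\bx})-Q(\bx,a)>0$. \Cref{thm:asym_unbias} gives $\bar Q(\bx,a)\to Q(\bx,a)$ only for actions that are sampled infinitely often (indeed its proof presupposes infinite exploration), so applying its conclusion to the very action assumed to be under-sampled is circular. The fix is easy and does not need \Cref{thm:asym_unbias} at all: under the hypothesis, $\bar Q(\bx,a)$ and $\hat\sigma(\bx,a)$ freeze after finitely many updates, $\delta_{\bx}(\hat a^*_{\bx},a)\ge 0$ by definition of $\hat a^*_{\bx}$, and whether $a$ is the current sample-best (use \Cref{eq:budget_allocation2}) or not (use \Cref{eq:budget_allocation1} with bounded, strictly positive $\sigma$'s and finite $\delta$'s), its target $\tilde N(\bx,a)$ still grows without bound with the local budget, which is all your deficit argument requires. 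With that repair, and the acknowledged bookkeeping for the most-starving rule and for ties (where the stated event holds trivially since the limiting posteriors coincide), your argument is sound and is, if anything, a more careful rendering of the paper's own reasoning.
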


\begin{theorem}\label{thm:max_PCS}
	Under the same assumptions of \Cref{thm:asym_unbias}, the APCS defined in \Cref{eq:APCS} is maximized asymptotically with simulation budget allocation \((\tilde{N}(\bx, a_1),\tilde{N}(\bx, a_2),\dots,\tilde{N}(\bx,a_{|A|}))\) by solving \Cref{eq:budget_allocation1,eq:budget_allocation2} with total budget \(N\), i.e.,
	$	\sum_{a \in A_{\bx}}\tilde{N}(\bx, a) = N.	$	
\end{theorem}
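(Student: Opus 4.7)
The plan is to maximize APCS subject to the budget constraint $\sum_{a \in A_{\bx}} N(\bx,a) = N$ via Lagrangian analysis. Using the posterior normality of $\tilde Q(\bx,a)$ from \Cref{eq:posterior}, for every sub-optimal $a \ne \hat a_{\bx}^*$ the difference $\tilde Q(\bx,a) - \tilde Q(\bx,\hat a_{\bx}^*)$ is normal with mean $-\delta_{\bx}(\hat a_{\bx}^*,a)$ and variance $\sigma^2(\bx,\hat a_{\bx}^*)/N(\bx,\hat a_{\bx}^*) + \sigma^2(\bx,a)/N(\bx,a)$, so each summand of $1-\text{APCS}$ equals $\Phi(-d_a)$ with
\begin{equation*}
d_a \;=\; \frac{\delta_{\bx}(\hat a_{\bx}^*,a)}{\sqrt{\sigma^2(\bx,\hat a_{\bx}^*)/N(\bx,\hat a_{\bx}^*) + \sigma^2(\bx,a)/N(\bx,a)}}.
\end{equation*}
The optimization problem becomes: minimize $\sum_{a \ne \hat a_{\bx}^*} \Phi(-d_a)$ over $\{N(\bx,a)\}$ subject to the total budget.

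Next I would form the Lagrangian $\mathcal L = \sum_{a \ne \hat a_{\bx}^*} \Phi(-d_a) + \lambda\bigl(\sum_a N(\bx,a) - N\bigr)$ and take stationarity conditions. For a sub-optimal $a$, $\partial \mathcal L/\partial N(\bx,a)$ involves only the single term $\Phi(-d_a)$ and its dependence on $N(\bx,a)$, whereas $\partial \mathcal L/\partial N(\bx,\hat a_{\bx}^*)$ couples to every sub-optimal action because $N(\bx,\hat a_{\bx}^*)$ appears in each $d_a$. Taking pairwise ratios $\partial \mathcal L/\partial N(\bx,a_n) = \partial \mathcal L/\partial N(\bx,a_{n+1}) = -\lambda$ between two sub-optimal actions, the common $\varphi(d_a)$ factor from differentiating $\Phi$ and the mean-difference $\delta_{\bx}(\hat a_{\bx}^*,\cdot)$ structure cancel cleanly in the asymptotic regime $N \to \infty$ (where $d_a$ grows), yielding precisely the ratio law \Cref{eq:budget_allocation1}. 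Treating the allocation fractions as continuous and replacing $\tilde N(\bx,a)$ by its asymptotic proportion is the standard device that makes these cancellations exact in the limit.

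The main obstacle, and the step that deserves the most care, is the equation for $\tilde N(\bx,\hat a_{\bx}^*)$. Differentiating $\mathcal L$ with respect to $N(\bx,\hat a_{\bx}^*)$ produces a sum of terms, one for each sub-optimal $a$, of the form $\varphi(d_a)\,\delta_{\bx}(\hat a_{\bx}^*,a)\,\sigma^2(\bx,\hat a_{\bx}^*)/(2 N(\bx,\hat a_{\bx}^*)^2 [\sigma^2(\bx,\hat a_{\bx}^*)/N(\bx,\hat a_{\bx}^*) + \sigma^2(\bx,a)/N(\bx,a)]^{3/2})$. Setting the ratio of this against $\partial \mathcal L/\partial N(\bx,a)$ equal to unity and substituting the already-derived pairwise relations for sub-optimal arms collapses the sum into $\tilde N^2(\bx,\hat a_{\bx}^*) = \sigma^2(\bx,\hat a_{\bx}^*) \sum_{a \ne \hat a_{\bx}^*} \tilde N^2(\bx,a)/\sigma^2(\bx,a)$, which is exactly \Cref{eq:budget_allocation2}. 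The budget constraint \Cref{eq:budget_allocation3} closes the system.

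Finally, I would verify that the stationary point is indeed a maximum of APCS by a second-order argument, or equivalently by invoking convexity of the loss function $\sum_a \Phi(-d_a)$ in the allocation variables on the probability simplex, so that the KKT point is the unique global maximizer of APCS in the asymptotic regime. The conclusion that $\sum_{a \in A_{\bx}} \tilde N(\bx,a) = N$ follows by construction of the constraint, and together \Cref{eq:budget_allocation1,eq:budget_allocation2,eq:budget_allocation3} provide the asymptotically APCS-maximizing allocation claimed.
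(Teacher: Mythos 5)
Your proposal follows essentially the same route as the paper's proof: maximize APCS under the budget constraint via a Lagrangian/KKT argument using the normal posteriors of the pairwise differences, obtain the equation for $\tilde N(\bx,\hat a^*_{\bx})$ by substituting the suboptimal stationarity conditions into the one for the sample-best arm, and get the pairwise ratio law asymptotically (the paper makes your ``clean cancellation'' precise by assuming $N(\bx,\hat a^*_{\bx}) \gg N(\bx,a)$ and then discarding logarithmic terms relative to terms linear in the allocations as $N\to\infty$). The extra second-order/convexity verification you mention is not part of the paper's argument and is not needed for its asymptotic claim.
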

\Cref{thm:max_PCS}, which follows from the result originally derived in \cite{chen2000simulation}, shows that at each point of the algorithm when a decision needs to be made, the action that maximizes the APCS will be selected and sampled. Therefore, the OCBA tree policy gradually maximizes the overall APCS at the root, which is a lower bound for PCS.

\subsection{Performance lower bound}
We take advantage of the normal distribution assumptions on the $Q$ functions and provide a lower bound on PCS.
\begin{theorem}[Lower bound on the probability of correct selection]\label{thm:PCS_bound}
	Under the same assumptions of \Cref{thm:asym_unbias}, the PCS at each stage and state is lower bounded by
	{\small
	\begin{align*}
	&PCS 
	\ge 1- \\
	&\sum_{a\in A_{\bx}, a\ne \hat{a}_{\bx}^*} \Phi \bigg(-\frac{\delta_{\bx}(\hat{a}_{\bx}^*,a)  \sqrt{N(\bx, \hat{a}_{\bx}^*) }}{ \sqrt{ \sigma^2(\bx, \hat{a}_{\bx}^*) + 
			\sigma(\bx, \hat{a}_{\bx}^*) \sigma^2(\bx, a) \sum_{\tilde{a} \in A_{\bx}, \tilde{a} \ne \hat{a}^*_{\bx}}  \frac{ r(\tilde{a}, a )  }{\sigma(\bx, \tilde{a})}   } } \bigg),
	\end{align*}
	where \(\Phi(\cdot)\) is the cdf of standard normal distribution and 
	\begin{align*}
	r_{\bx}(\tilde{a}, a) = \frac{\sigma(\bx, \tilde{a}) \delta_{\bx}(\hat{a}_{\bx}^*,a)^2 }{\sigma(\bx, a) \delta_{\bx}(\hat{a}_{\bx}^*,\tilde{a})  }.
	\end{align*}
	}
\end{theorem}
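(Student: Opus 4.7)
The plan is to chain the Bonferroni bound from \Cref{eq:APCS} with a standard normal-tail computation and then substitute the OCBA allocation identities. Starting from
\begin{align*}
PCS \ge 1 - \sum_{a \ne \hat{a}_{\bx}^*} P\bigl[\tilde{Q}(\bx, \hat{a}_{\bx}^*) \le \tilde{Q}(\bx, a)\bigr],
\end{align*}
and invoking the normal posterior of \Cref{eq:posterior}, the posteriors $\tilde{Q}(\bx, \hat{a}_{\bx}^*)$ and $\tilde{Q}(\bx, a)$ are independent (they are built from disjoint sample sets) normal random variables, so their difference is normal with mean $\delta_{\bx}(\hat{a}_{\bx}^*, a)$ and variance $\sigma^2(\bx, \hat{a}_{\bx}^*)/N(\bx, \hat{a}_{\bx}^*) + \sigma^2(\bx, a)/N(\bx, a)$. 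Each term of the Bonferroni sum therefore equals $\Phi\!\left(-\delta_{\bx}(\hat{a}_{\bx}^*, a)\big/\sqrt{\sigma^2(\bx, \hat{a}_{\bx}^*)/N(\bx, \hat{a}_{\bx}^*) + \sigma^2(\bx, a)/N(\bx, a)}\right)$.

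Next I would recast this denominator into the form stated in the theorem. Multiplying numerator and denominator by $\sqrt{N(\bx, \hat{a}_{\bx}^*)}$ pulls $\sqrt{N(\bx, \hat{a}_{\bx}^*)}$ into the numerator and leaves the denominator as $\sqrt{\sigma^2(\bx, \hat{a}_{\bx}^*) + \sigma^2(\bx, a)\,N(\bx, \hat{a}_{\bx}^*)/N(\bx, a)}$. The remaining task is to produce an upper bound on the ratio $N(\bx, \hat{a}_{\bx}^*)/N(\bx, a)$ of the form $\sigma(\bx, \hat{a}_{\bx}^*)\sum_{\tilde{a} \ne \hat{a}_{\bx}^*} r_{\bx}(\tilde{a}, a)/\sigma(\bx, \tilde{a})$, since by the monotonicity of $\Phi$ any such upper bound on the denominator transfers to an upper bound on the pairwise error probability and hence to a lower bound on $PCS$.

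To obtain that upper bound I would apply the OCBA allocation identities of \Cref{eq:budget_allocation1,eq:budget_allocation2}. Equation \eqref{eq:budget_allocation2} writes $N(\bx, \hat{a}_{\bx}^*) = \sigma(\bx, \hat{a}_{\bx}^*)\sqrt{\sum_{\tilde{a}\ne \hat{a}_{\bx}^*} N(\bx, \tilde{a})^2/\sigma^2(\bx, \tilde{a})}$, and \Cref{eq:budget_allocation1} expresses each $N(\bx, \tilde{a})$ as $N(\bx, a)$ scaled by the corresponding variance-to-gap ratios. The main obstacle, and the place where the specific algebraic form in the theorem is fixed, will be collapsing the square root of that sum of squares into a plain sum matching $\sum_{\tilde{a}} r_{\bx}(\tilde{a}, a)/\sigma(\bx, \tilde{a})$. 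I plan to clear this hurdle with the elementary inequality $\sqrt{\sum_i x_i^2} \le \sum_i x_i$ for non-negative $x_i$, applied with $x_{\tilde{a}} = N(\bx, \tilde{a})/\sigma(\bx, \tilde{a})$. After that substitution and routine simplification, each pairwise probability is dominated by the $\Phi(\cdot)$ term displayed in the theorem, and summing over $a \ne \hat{a}_{\bx}^*$ delivers the claimed lower bound on $PCS$.
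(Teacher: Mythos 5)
Your proposal is correct and follows essentially the same route as the paper's own proof: Bonferroni plus the normal posterior difference, substitution of the allocation identities \eqref{eq:budget_allocation1}--\eqref{eq:budget_allocation2} to express the sample counts through the ratios $r_{\bx}(\tilde{a},a)$, the inequality $\sqrt{\sum_i x_i^2}\le\sum_i x_i$, and monotonicity of $\Phi$. Rearranging by pulling $\sqrt{N(\bx,\hat{a}_{\bx}^*)}$ into the numerator first, rather than solving for $N(\bx,a)$ as the paper does, is only a cosmetic difference.
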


\section{Numerical examples}\label{sec:simulation}
In this section, we evaluate our proposed OCBA-MCTS on two tree search problems against the well-known UCT \cite{kocsis2006bandit}. The effectiveness is measured by PCS, which is estimated by the fraction of times the algorithm chooses the true optimal action. We first evaluate our algorithm on an inventory control problem with random non-normal reward. Then we apply our algorithm to the game of tic-tac-toe. \par 

For convenience, we restate the UCT tree policy here. At a state node $\bx$, the UCT policy will select the child state-action node with the highest upper confidence bound, i.e.,
\begin{align}\label{eq:ucb}
    \hat{a} 
    =& \arg \max_{a \in A_x} \big\{ \bar{Q}(\bx, a) + w_e \sqrt{\frac{2\log  \sum_{a' \in A_x}N(\bx, a')}{ N(\bx, a)}}  \big\},
\end{align}
where $w_e$ is the ``exploration weight". The original UCT algorithm assumes the value function in each stage is bounded in $[0,1]$ because it sets $w_e = 1$, whereas the support is unknown in many practical problems. Therefore, in general, $w_e$ needs to be tuned to encourage exploration. \par 

For all experiments, we set the smoothing parameter in \Cref{eq:bp4} in the backpropagation phase to $\alpha_{N(\bx)} = 1- \frac{1}{5N(\bx)}$. 
Since initial estimates of sample variance can be less accurate with small $n_0$, we add an initial variance $\sigma_0^2 > 0$, which decays as the number of visits grows, to the sample variance to encourage exploration. Specifically, we set  
\begin{align*}
\hat{\sigma}^2(\bx, a) &=  \frac{1}{N(\bx, a)} \sum_{t = 1}^{N(\bx, a)}\big(\hat{Q}^t(\bx, a) - \bar{Q}(\bx, a)\big)^2 + \sigma_0^2/N(\bx, a),
\end{align*}
where the first term is the sample variance, and second term vanishes as $N(\bx, a)$ grows.  \par

\subsection{Inventory control problem}
We now evaluate the performance of OCBA-MCTS using the inventory control problem in \cite{chang2005adaptive}. The objective is to find the initial order quantity that minimizes the total cost over a finite horizon. At decision period $i$, we denote by \(D_i\) the random demand in period \(i\), $\bx_i = (x_i, i)$ the state node, where \(x_i\) is the inventory
level at the end of period \(i\) (which is also the inventory at the beginning of period \(i+1\)), $(\bx_i, a_i)$ the corresponding child state-action node with \(a_i\) being the order amount in period \(i\), \(p\) the per period per unit demand lost penalty cost, \(h\) the per period per unit inventory holding cost, \(K\) the fixed (set-up) cost per order, \(M\) the maximum inventory level (storage capacity) and \(H\) the number of simulation stages. We set \(M = 20\), initial state \(x_0 = 5\), \(h = 1\), \(H = 3\), \(D_i \sim DU(0,9)\) (discrete uniform, inclusive), and consider two different settings for $p$ and $K$:
\begin{enumerate}
    \item Experiment 1: $p = 10$ and $K=0$;
    \item Experiment 2: $p = 1$ and $K=5$.
\end{enumerate}
The reward function, which in this case is the negative of the inventory cost in stage \(i\), is defined by 
\begin{align*}
R(\bx_i,a_i) =& -(h\max\{0,x_i+a_i-D_i\} + \\ 
&p\max\{0,D_i-x_i-a_i\} 
+ K\mathds{1}_{\{a_i>0\}}),
\end{align*} 
where \(\mathds{1}\) is the indicator function, and the state transition follows
\begin{align*}
x_{i+1} = \max(0, x_i+a_i-D_i),
\end{align*}
where 
\begin{align*}
    a_i \in A_{x_i} = \{a| x_i+a \le M  \}.
\end{align*}
For UCT, to accommodate the reward support not being \([0,1]\), we adjust the exploration weight when updating a state-action node, i.e., set $w_e$ initially to 1, then in the backpropagation step, update $w_e$ by 
\begin{align*}
    w_e = \max(w_e, |\hat{Q}^{N(\bx, a)}(\bx, a)|),
\end{align*}
where $\hat{Q}^{N(\bx, a)}(\bx, a)$ is obtained in \Cref{eq:bp1}. The initial variance $\sigma_0^2$ is set to 100.
For both OCBA-MCTS and UCT, we set the number of expansions ($n_0$) to 4 for depth 1 state-action nodes (i.e., the child nodes of the root) and to 2 for all other state action nodes in Experiment 1, and set $n_0$ to 2 for all nodes in Experiment 2. The different values of $n_0$ are due to the variance decreasing with the depth of a node, and Experiment 2 is a relatively easier problem.

For both experiment settings, each algorithm is repeated $1,000$ times at each simulation budget level $N$ to estimate PCS.  Since Experiment 1 is a much harder problem compared to Experiment 2, more rollouts (budget) are required. Therefore, $N$ ranges from $10,000$ to $20,000$ and from $50$ to $170$ for Experiments 1 and 2, respectively. \par 

\begin{figure}[h]	
    \captionsetup[subfigure]{aboveskip=-1pt,belowskip=8pt}
	\centering
	\begin{subfigure}{0.5\textwidth}
	\centering
		\includegraphics[scale = 0.5]{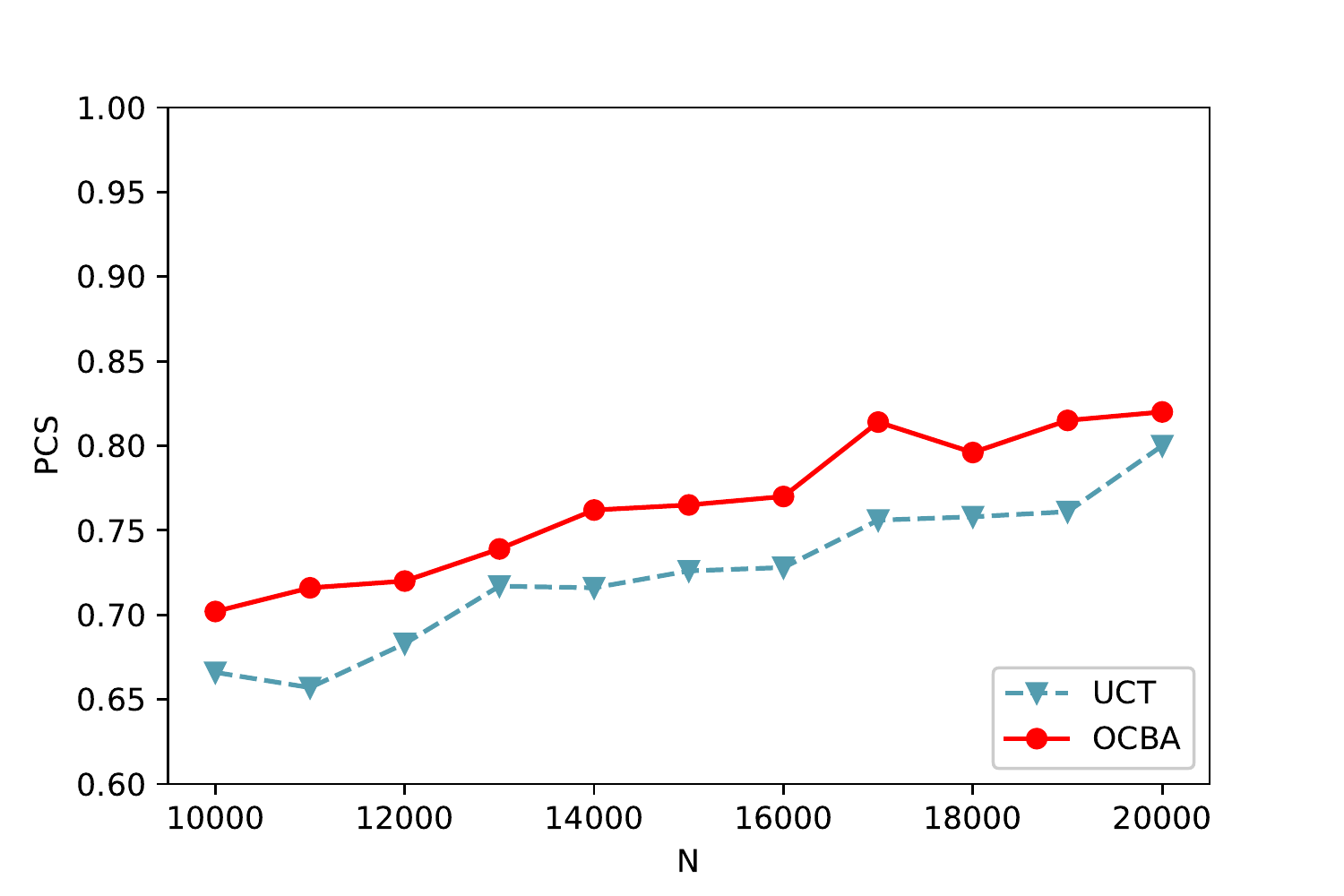}
		\caption{Experiment 1: $p = 10$, $K=0$}
		\label{fig:inventory_pcs_p10_K0}
	\end{subfigure}
	~
	\begin{subfigure}{0.5\textwidth}
	\centering
		\includegraphics[scale = 0.5]{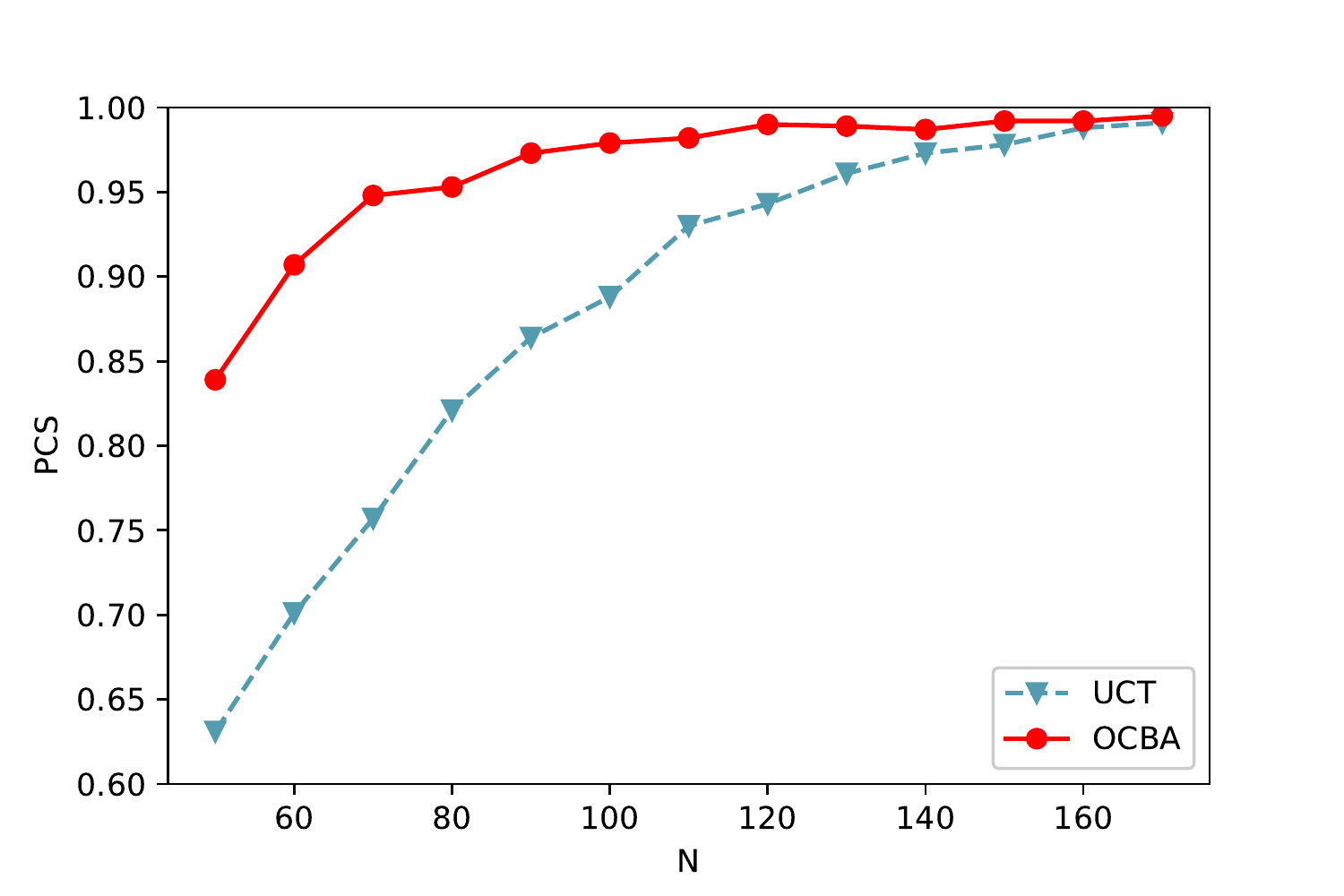}
		\caption{Experiment 2: $p = 1$, $K=5$}
		\label{fig:inventory_pcs_p1_K5}
	\end{subfigure}
	\caption{The estimated PCS as a function of sampling budget achieved by UCT-MCTS and OCBA-MCTS for inventory control problem, averaged over 1,000 runs.}
	\label{fig:inventory_pcs}
\end{figure}
The estimated PCS curves for both experiments are illustrated in \Cref{fig:inventory_pcs}, where the standard error ($= \sqrt{PCS(1-PCS)/N}$) is small and thus omitted for clarity.
OCBA-MCTS achieves better PCS for both experiment setups. For Experiment 1 (optimal action \(a^*_0 = 4\)), as shown in \Cref{fig:inventory_pcs_p10_K0}, OCBA-MCTS achieves a $10\%$ higher PCS (absolute) compared to UCT. For Experiment 2 (optimal action \(a^*_0 = 0\)), we see a $20\%$ performance gap between UCT and OCBA-MCTS when the number of samples is less than 100, after which UCT gradually closes the gap as expected.


It is also beneficial to compare the distribution of budget allocation of OCBA-MCTS and UCT to show the exploration-exploitation balance of OCBA-MCTS. For convenience, we label the child actions of the root node from $0$ to $15$, where action $i$ denotes ordering $i$ units. Figures \ref{fig:inventory_sample_dist_K0_p10} and \ref{fig:inventory_sample_dist_K5_p1} illustrate the average number of visits, average estimated value function, and average estimated standard deviation of all child state-action nodes of the root node over 1,000 repeated runs with 20,000 and 170 rollouts for Experiment 1 and Experiment 2, respectively. Note that although the estimated standard deviation does not play a role in determining the allocation for UCT, we still plot it for reference. Both figures show that the number of visits to children nodes is, to some extent, proportional to the estimated value of the node for UCT.
On the other hand, OCBA-MCTS puts more effort on the estimated optimal and second optimal actions (actions 4 and 3 for Experiment 1 and actions 0 and 1 for Experiment 2, respectively), as illustrated in Figures \ref{fig:inventory_sample_dist_K0_p10_ocba} and \ref{fig:inventory_sample_dist_K5_p1_ocba}. \par 

In Experiment 1 where there are two competing actions with similar estimated values (actions 3 and 4, with action 4 being the optimal), OCBA-MCTS will spend most of its sampling budget on those two potential actions and put much lesser effort on clearly inferior actions, such as actions 6 to 14, compared to UCT. This strategy makes more sense when the objective is to identify the best action, and thus is more suitable for MCTS problems, as the ultimate goal is to make a decision. It is also interesting to note that OCBA-MCTS actually allocates slightly more visits to the competing suboptimal action than the optimal one (mean $8486$ and $8468$ for actions 3 and 4, respectively), which will not happen in bandit-based policies, as their goal is to minimize regret, and thus will put more effort on exploiting the estimated optimal action. In Experiment 2 where the optimum is slightly easier to find, although OCBA-MCTS allocates a larger fraction of samples to suboptimal actions compared to that in Experiment 1, most of the samples are still allocated to the top 2 actions as shown in \Cref{fig:inventory_sample_dist_K5_p1_ocba}, whereas UCT performs similar to that in Experiment 1.
\begin{figure}[ht]	
    \captionsetup[subfigure]{aboveskip=-1pt,belowskip=8pt}
	\centering
	\begin{subfigure}{0.5\textwidth}
		\includegraphics[scale = 0.6]{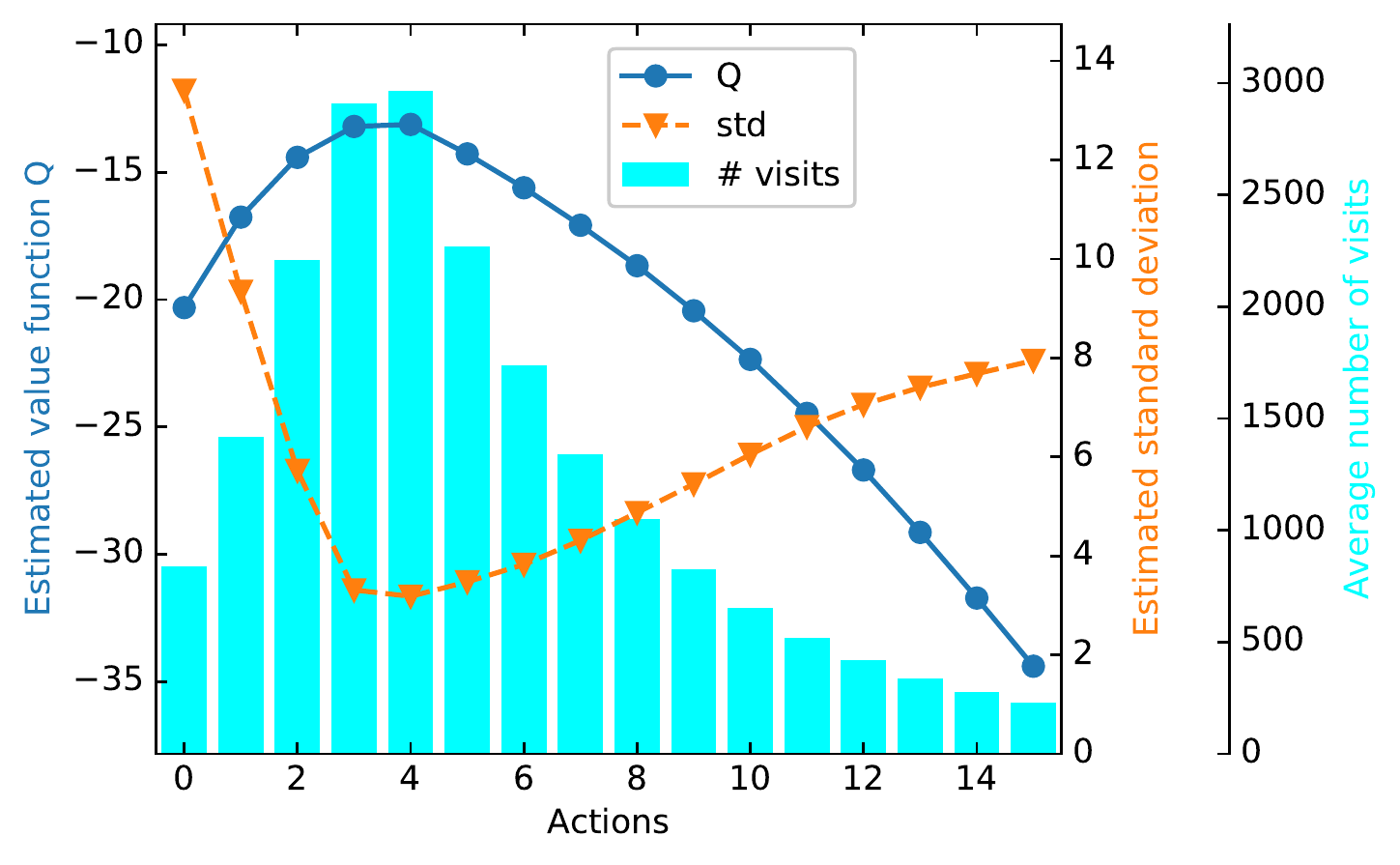}
		\caption{UCT}
		\label{fig:inventory_sample_dist_K0_p10_uct}
	\end{subfigure}
	~ ~
	\begin{subfigure}{0.5\textwidth}
		\includegraphics[scale = 0.6]{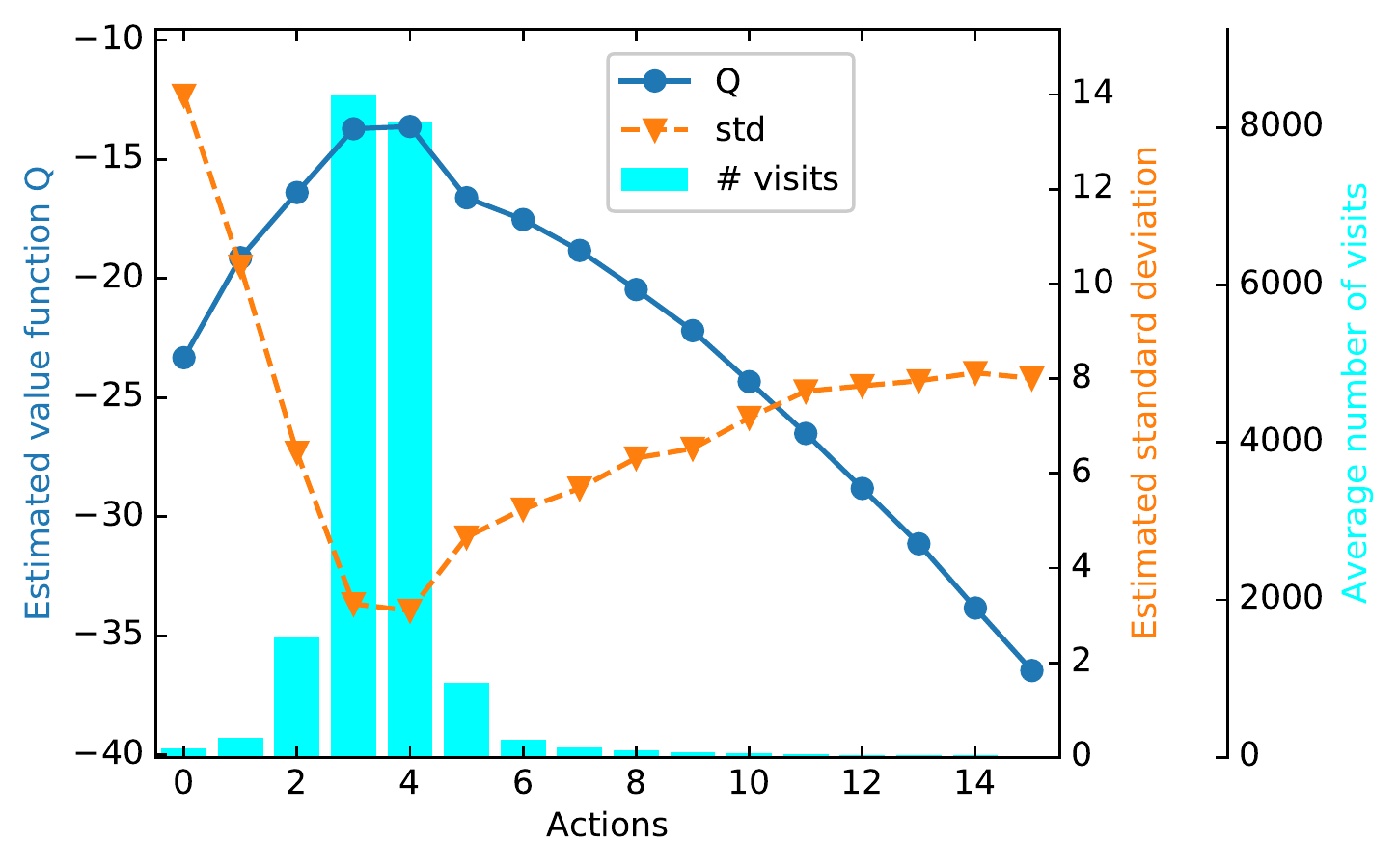}
		\caption{OCBA}
		\label{fig:inventory_sample_dist_K0_p10_ocba}
	\end{subfigure}
	\caption{Sampling distribution for Experiment 1 with $N=20,000$, averaged over 1,000 runs.}
	\label{fig:inventory_sample_dist_K0_p10}
\end{figure}

\begin{figure}[ht]	
    \captionsetup[subfigure]{aboveskip=-1pt,belowskip=8pt}
	\centering
	\begin{subfigure}{0.5\textwidth}
		\includegraphics[scale = 0.6]{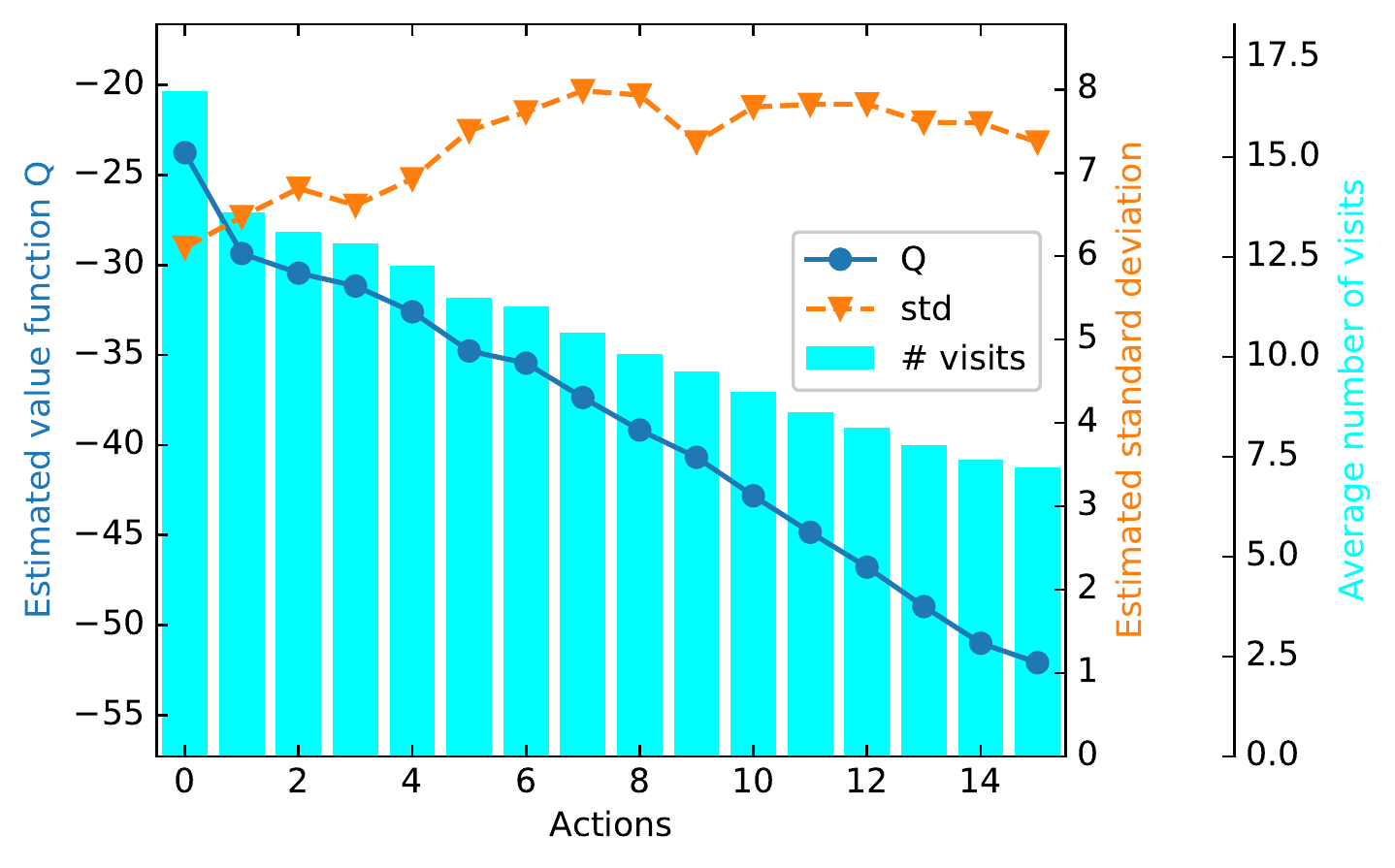}
		\caption{UCT}
		\label{fig:inventory_sample_dist_K5_p1_uct}
	\end{subfigure}
	~
	\begin{subfigure}{0.5\textwidth}
		\includegraphics[scale = 0.6]{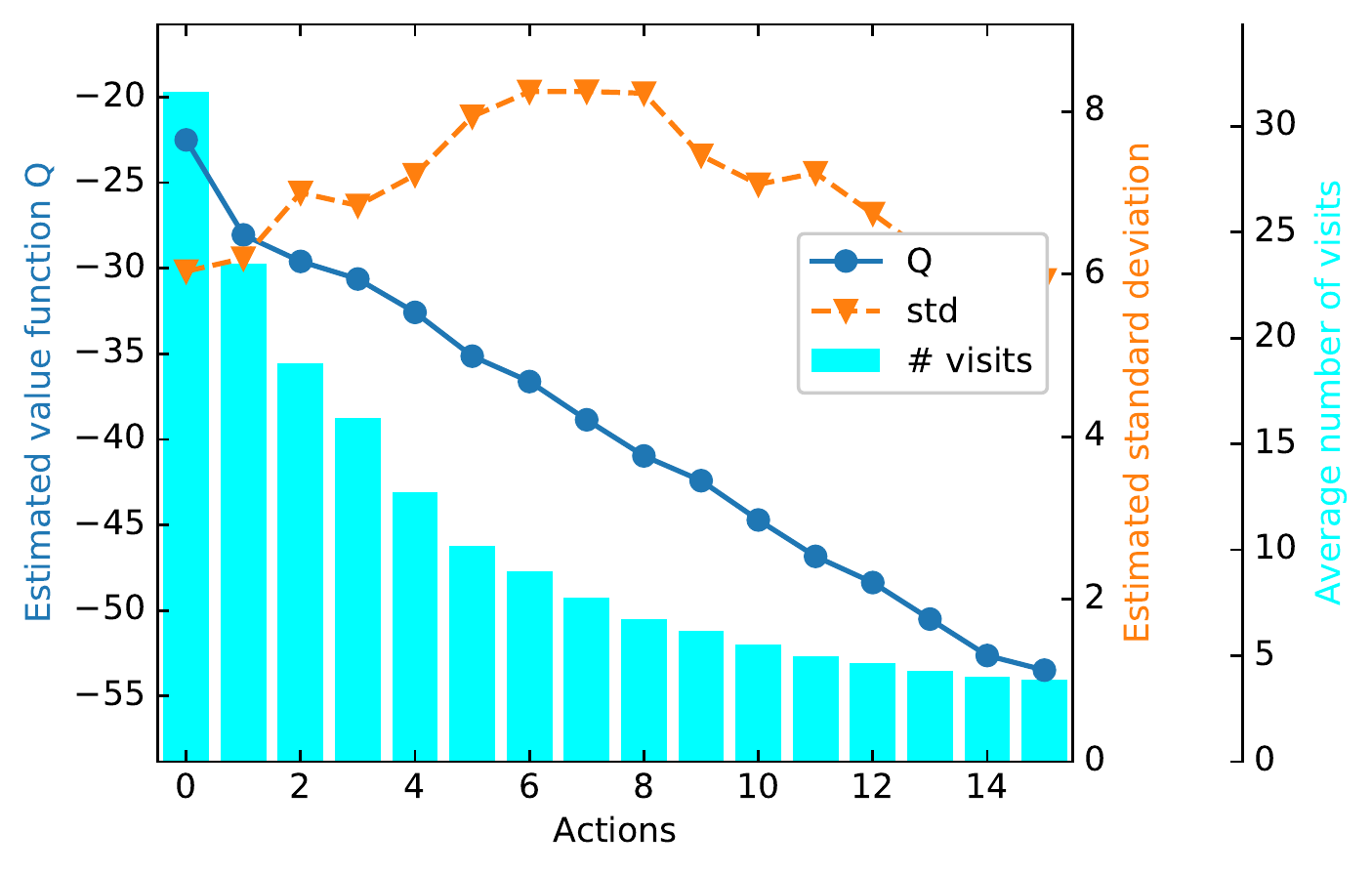}
		\caption{OCBA}
		\label{fig:inventory_sample_dist_K5_p1_ocba}
	\end{subfigure}
	\caption{Sampling distribution for Experiment 2 with $ N=170 $, averaged over 1,000 runs.}
	\label{fig:inventory_sample_dist_K5_p1}
\end{figure}

\subsection{Tic-tac-toe}
In this section, we apply OCBA-MCTS and UCT to the game of tic-tac-toe to identify the optimal move. Tic-tac-toe is a game for two players who take turns marking `X' (Player 1) and `O' (Player 2) on a $3\times 3$ board. The objective for Player 1 (Player 2) is to mark 3 consecutive `X' (`O') in a row, column or diagonal. If both players act optimally, the game will always end in a draw. \par 

For ease of presentation, we number the spaces sequentially as shown in \Cref{fig:tic_tac_toe_actions}. We use OCBA-MCTS and UCT to represent Player 2, with Player 1 marked `X' on space 0 as shown in \Cref{fig:tic_tac_toe_root}. In this situation, the optimal move for Player 2 will be marking space 4 (shown in \Cref{fig:tic_tac_toe_optimal}), as taking any other space will end up in losing the game if Player 1 plays optimally.
\begin{figure}[h]
    \centering
	\begin{subfigure}{0.15\textwidth}
	    \centering
		\includegraphics[scale = 0.5]{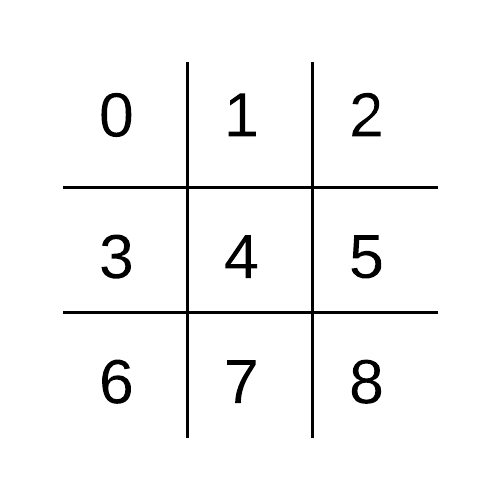}
		\caption{Action layout}
		\label{fig:tic_tac_toe_actions}
	\end{subfigure}
	\begin{subfigure}{0.15\textwidth}
	    \centering
		\includegraphics[scale = 0.5]{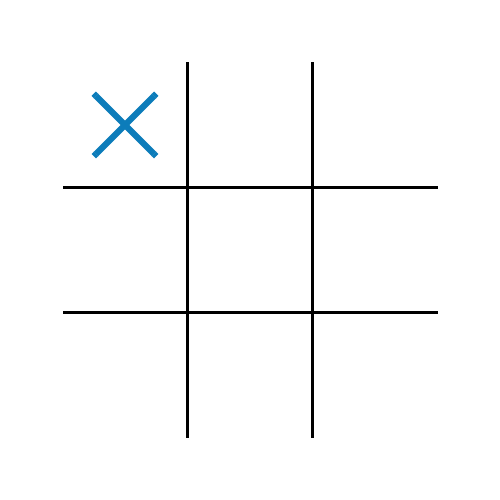}
		\caption{Root node}
		\label{fig:tic_tac_toe_root}
	\end{subfigure}
	\begin{subfigure}{0.15\textwidth}
	    \centering
		\includegraphics[scale = 0.5]{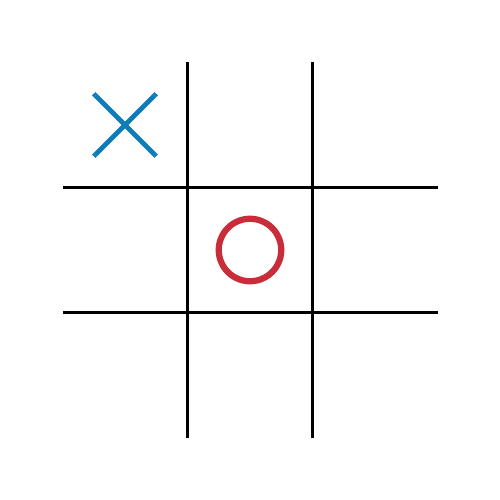}
		\caption{Optimal}
		\label{fig:tic_tac_toe_optimal}
	\end{subfigure}
	\caption{Tic-tac-toe board setup.}
	\label{fig:tic_tac_toe_setup}
\end{figure}
In this game, Player 2 (MCTS algorithm) makes decisions at even stages ($0, 2, 4, \dots $) and Player 1 makes decisions at odd stages ($1, 3, \dots $). The state transitioning is deterministic and Player 1's move is modeled using a randomized policy. We consider two different policies for Player 1:
\begin{enumerate}
    \item Experiment 3: Player 1 plays randomly, i.e., with equal probability to mark any feasible space;
    \item Experiment 4: Player 1 plays UCT.
\end{enumerate}
We compare the performance of OCBA-MCTS and UCT on Player 2 in both experiments.
At state node $\bx$, the reward function for taking action $a$ is defined according to the following rules: immediately after taking the action, if Player 2 wins the game, $R(\bx, a) = 1$, if it leads to a draw, $R(\bx, a) = 0.5$; otherwise (Player 2 loses or in any non-terminating state), $R(\bx, a) = 0$. $n_0$ is set to 2 across all nodes for both UCT and OCBA-MCTS.  
Since the value function for all state-action nodes is now bounded in $[0 , 1]$, we set $w_e = 1$ throughout the entire experiment for UCT policies. The initial variance $\sigma_0^2$ is set to 10. For Experiment 4 where Player 1 plays UCT, its goal is to {\it minimize} the reward, therefore, Player 1 will select the action that minimizes the lower confidence bound, i.e.,
\begin{align*}
        \hat{a} 
    =& \arg \min_{a \in A_x} \big\{ \bar{Q}(\bx, a) - w_e \sqrt{\frac{2\log  \sum_{a' \in A_x}N(\bx, a')}{ N(\bx, a)}}  \big\}.
\end{align*}


Similar to the previous section, we plot the PCS of the two algorithms as a function of the number of rollouts, which ranges from 300 to 700 for both experiments and the PCS is estimated over $2000$ independent experiments at each rollout level. The results are shown in \Cref{fig:tic_tac_toe_results}, which indicates that the proposed OCBA-MCTS produces a more accurate estimate of the optimal action compared to UCT.
\begin{figure}[ht]	
    \captionsetup[subfigure]{aboveskip=-1pt,belowskip=8pt}
	\centering
	\begin{subfigure}{0.5\textwidth}
	\centering
		\includegraphics[scale = 0.5]{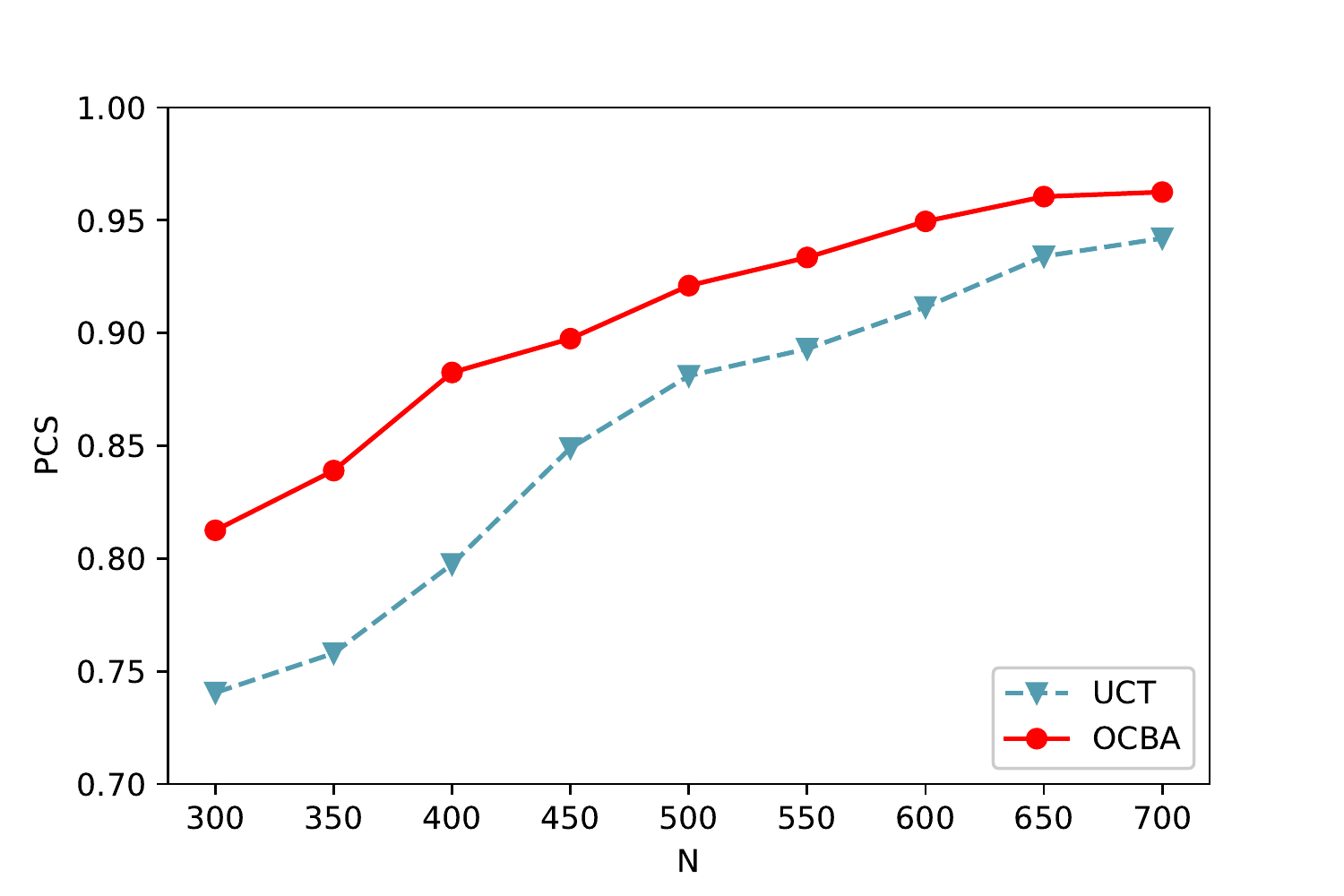}
		\caption{Experiment 3: Player 1 plays randomly.}
		\label{fig:tic_tac_toe_results_random_opponent}
	\end{subfigure}
	~
	\begin{subfigure}{0.5\textwidth}
	\centering
		\includegraphics[scale = 0.5]{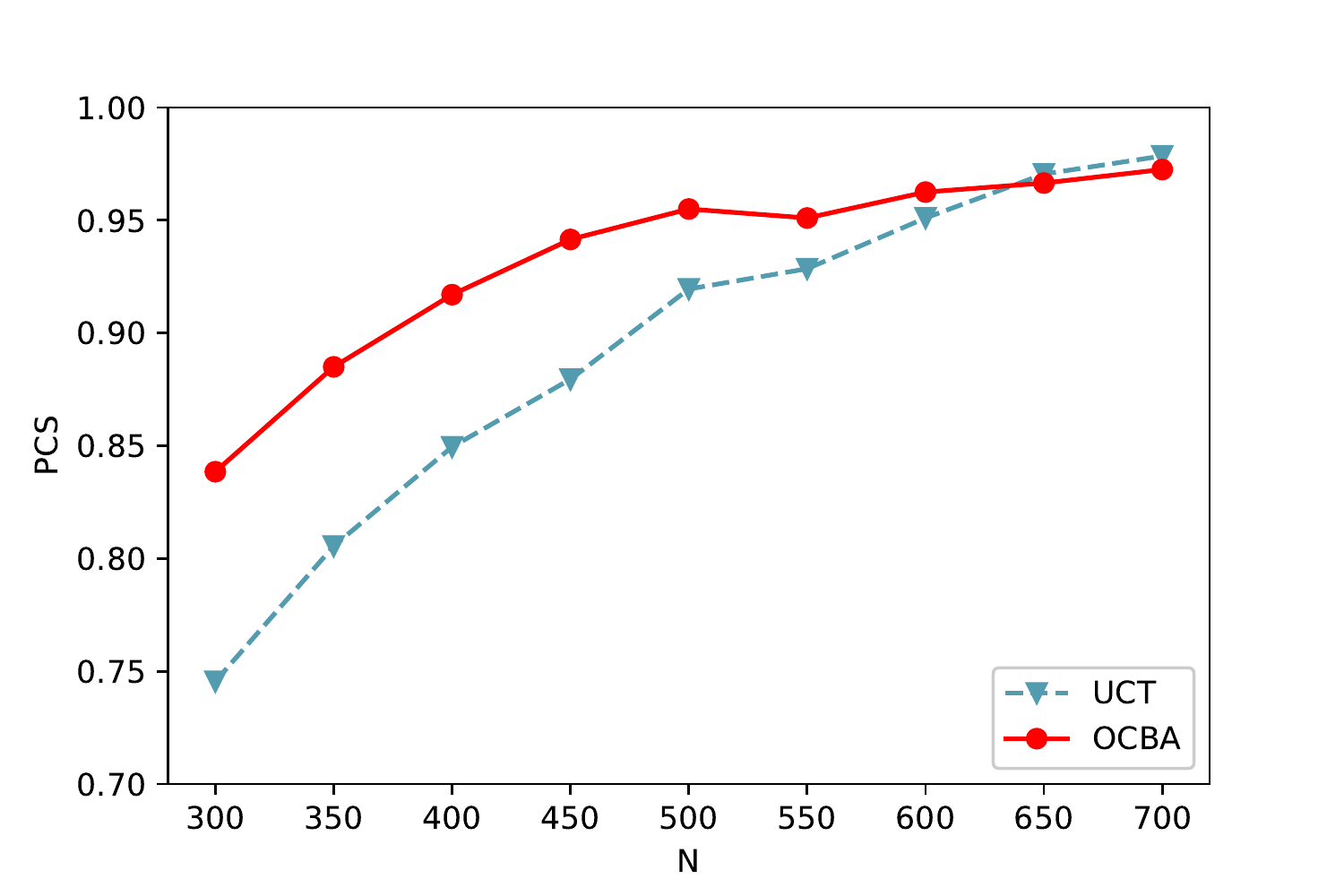}
		\caption{Experiment 4: Player 1 plays UCT.}
		\label{fig:tic_tac_toe_results_uct_opponent}
	\end{subfigure}
	\caption{The estimated PCS as a function of sampling budget achieved by UCT-MCTS and OCBA-MCTS for tic-tac-toe, averaged over 2000 runs.}
	\label{fig:tic_tac_toe_results}
\end{figure}
Both experiments show that OCBA-MCTS is better at finding the optimal move when the sampling budget is relatively low. The performance of UCT and OCBA-MCTS become comparable when more samples become available. We also note that there is a greater performance gap between UCT and OCBA-MCTS in Experiment 3 than in Experiment 4: in Experiment 3, OCBA-MCTS achieves $10\%$ better PCS, whereas in Experiment 4, the difference is around $5\%$ when $N<500$ and soon catches up as $N$ increases. This is expected, as it becomes easier to determine the optimal action when the opponent applies an AI algorithm (i.e., Player 1 has a better chance to take its optimal action). In this case, space 4 becomes a clear optimum and therefore Player 2's UCT algorithm tends to exploit it more, which leads to better performance.

The sampling distributions for OCBA-MCTS and UCT with $N = 700$ for both experiments are shown in Figures \ref{fig:TTT_sample_distribution_random_opponent} and \ref{fig:TTT_sample_distribution_uct_opponent}. In this game, since a relatively clear optimum is available, OCBA-MCTS and UCT behaved differently compared to that in the inventory control problem. As shown in Figures \ref{fig:TTT_sample_distribution_random_opponent_uct} and \ref{fig:TTT_sample_distribution_uct_opponent_uct}, 
UCT spends most of the sampling budget exploiting this action, whereas OCBA will still try to explore other suboptimal actions due to its tendency to better balance exploration and exploitation.
\begin{figure}[ht]	
    \captionsetup[subfigure]{aboveskip=-1pt,belowskip=8pt}
	\centering
	\begin{subfigure}{0.5\textwidth}
		\includegraphics[scale = 0.55]{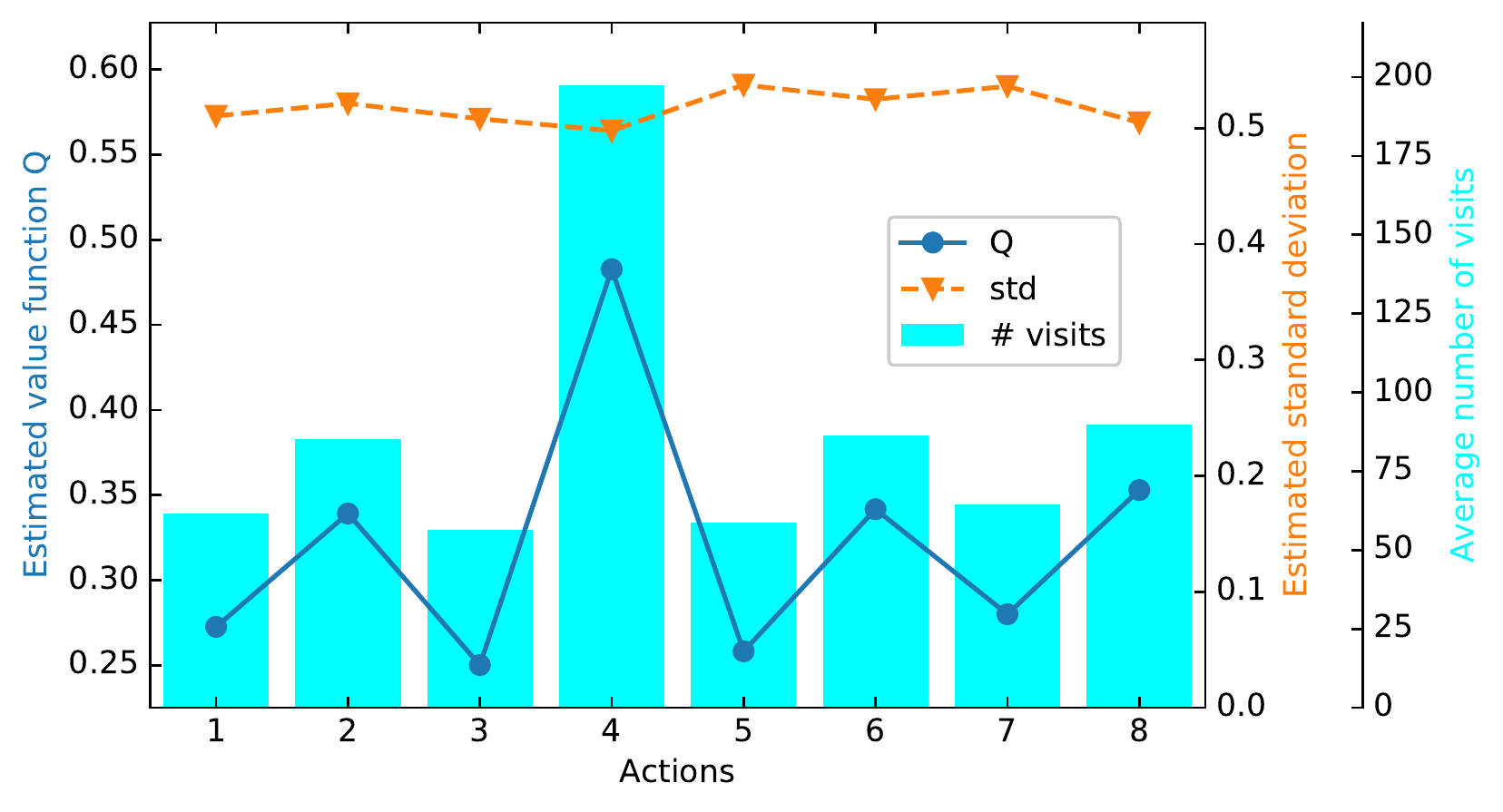}
		\caption{UCT}
		\label{fig:TTT_sample_distribution_random_opponent_uct}
	\end{subfigure}
	~
	\begin{subfigure}{0.5\textwidth}
		\includegraphics[scale = 0.55]{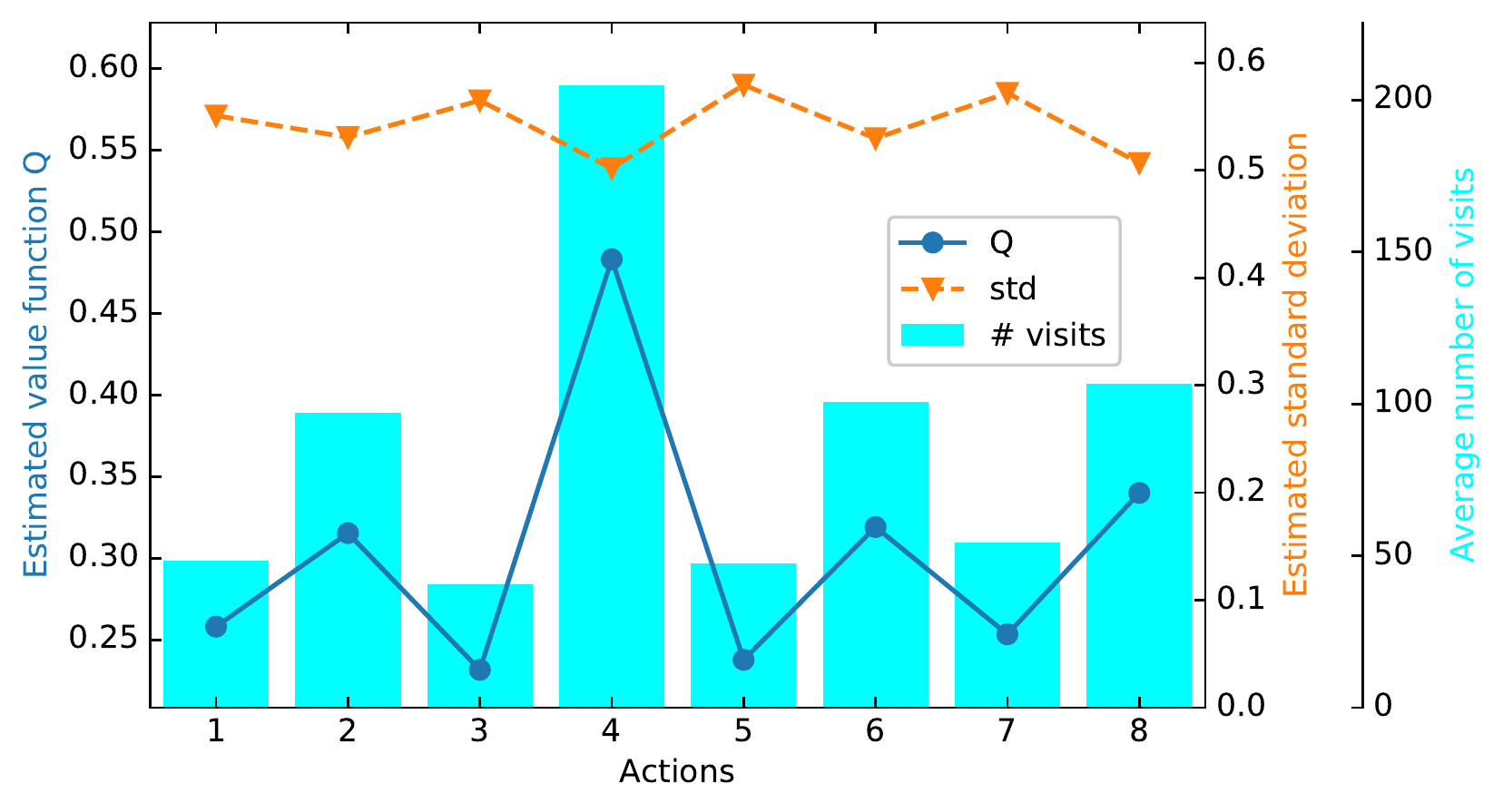}
		\caption{OCBA}
		\label{fig:TTT_sample_distribution_random_opponent_ocba}
	\end{subfigure}
	\caption{Sampling distributions for Experiment 3, averaged over 2000 runs.}
	\label{fig:TTT_sample_distribution_random_opponent}
\end{figure}

\begin{figure}[ht]	
    \captionsetup[subfigure]{aboveskip=-1pt,belowskip=8pt}
	\centering
	\begin{subfigure}{0.5\textwidth}
		\includegraphics[scale = 0.55]{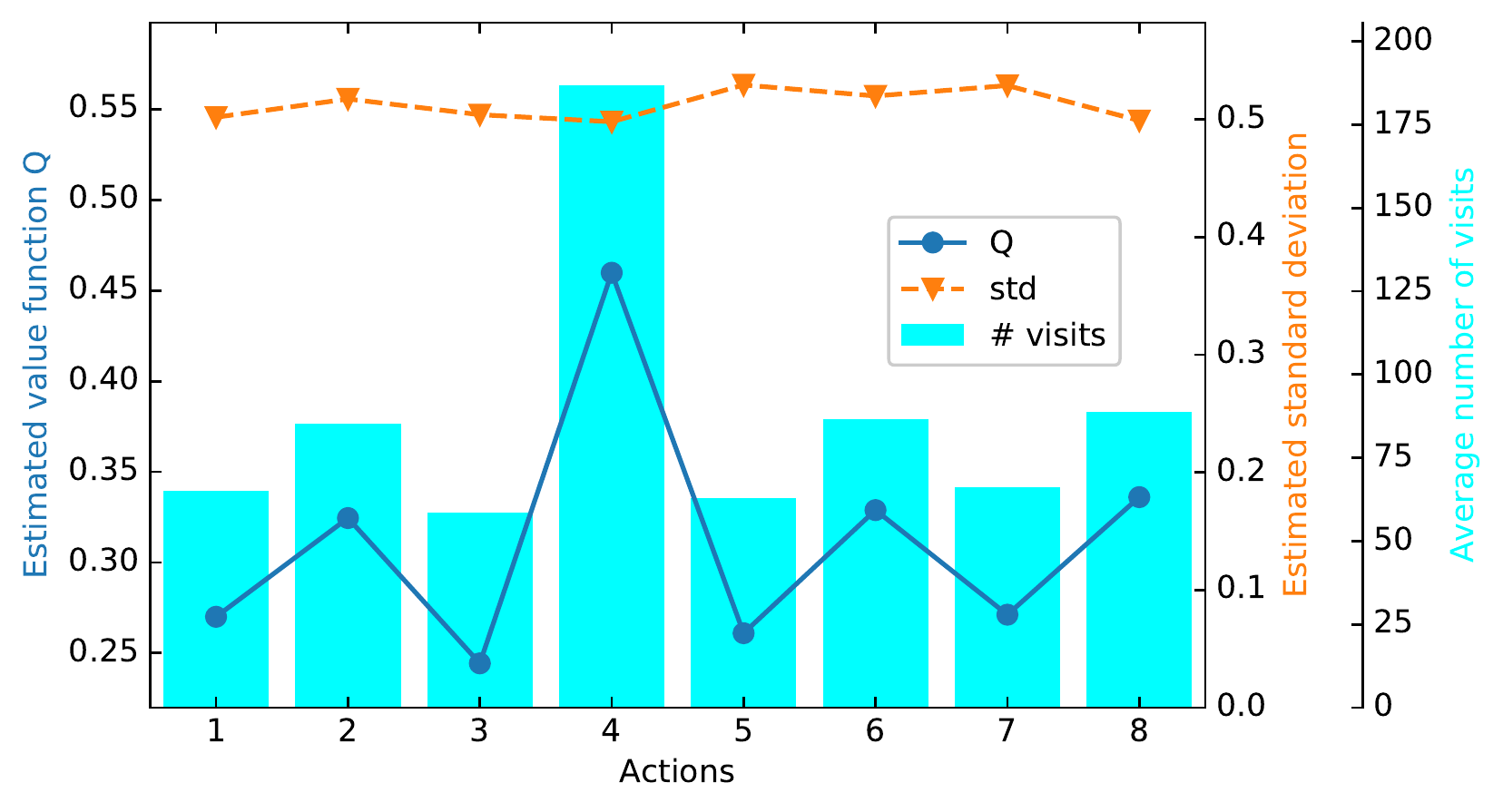}
		\caption{UCT}
		\label{fig:TTT_sample_distribution_uct_opponent_uct}
	\end{subfigure}
	~
	\begin{subfigure}{0.5\textwidth}
		\includegraphics[scale = 0.55]{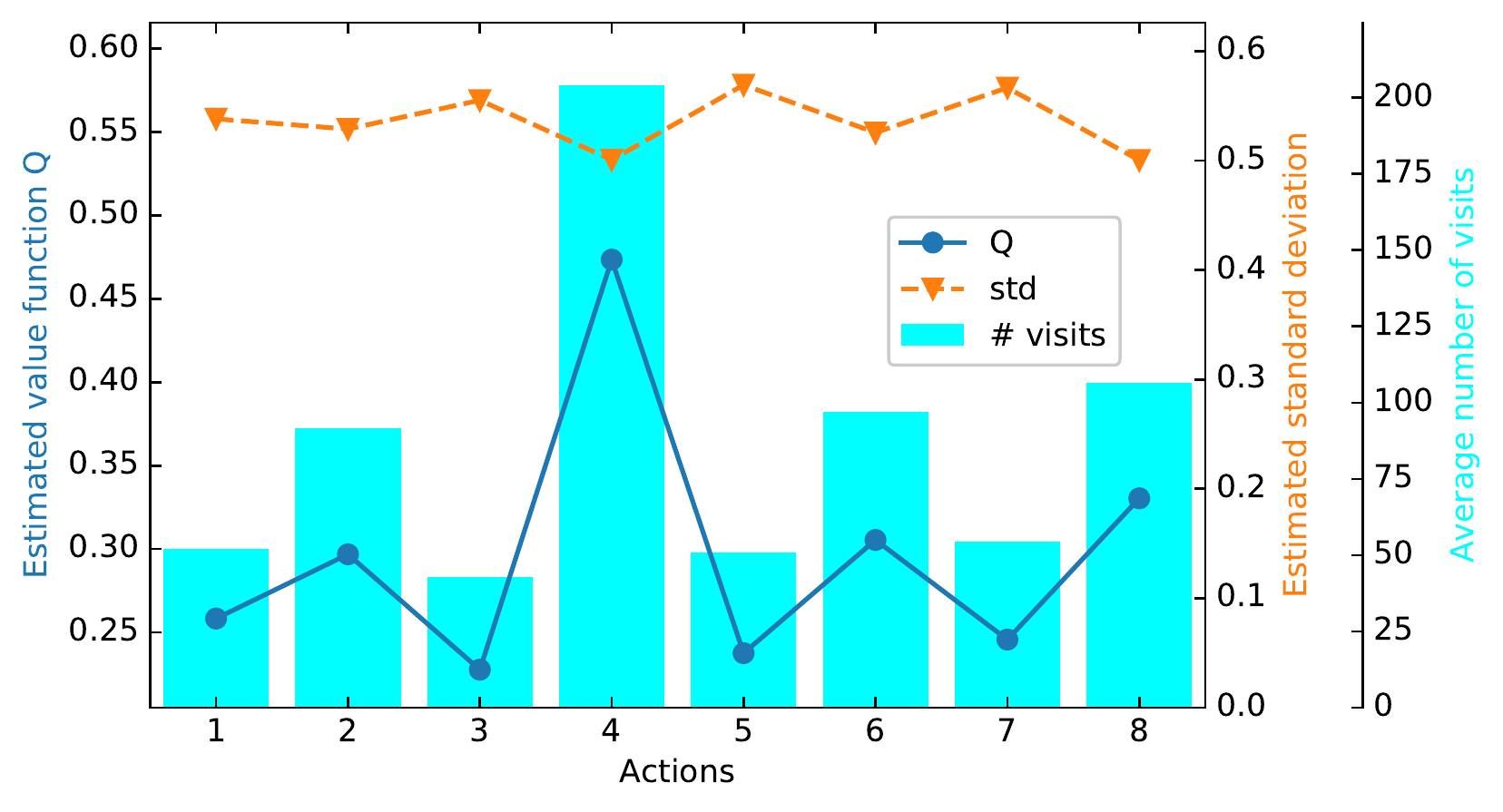}
		\caption{OCBA}
		\label{fig:TTT_sample_distribution_uct_opponent_ocba}
	\end{subfigure}
	\caption{Sampling distributions for Experiment 4, averaged over 2000 runs.}
	\label{fig:TTT_sample_distribution_uct_opponent}
\end{figure}

In summary, the proposed OCBA-MCTS outperforms UCT in both experiments in finding the optimal action at the root. Since the objective of the proposed OCBA tree policy is to maximize PCS, it leads to different budget allocation and better PCS.

\section{Conclusion and future research}\label{sec:conclusion}
In this paper, we present a new OCBA tree policy for MCTS. Unlike bandit-based tree policies (e.g., UCT), the new policy maximizes PCS at the root node, and in doing so, balances the exploration and exploitation trade-off differently. Furthermore, the new OCBA tree policy relaxes the assumption of known bounded support on the reward distribution, and thus makes MCTS more generally applicable.\par 

For future research, we intend to explore the use of a batch sampling scheme in Algorithm \ref{alg:selection}, which allocates a batch of \(\Delta > 1\) samples at each node. With batch sampling and updating, we may exploit the power of parallel computing to more quickly identify the optimal action.
Furthermore, the proposed OCBA-MCTS algorithm aims at selecting only the optimal action at the root. It may be of interest to show that our algorithm is \(\epsilon-\delta\)-correct and establish an upper bound on the sample complexity.
\bibliographystyle{ieeetr}

\begin{appendices}
\section{Convergence analysis}\label{app:sec:convergence_analysis}
To prove that our algorithm correctly selects the optimal action as the sampling budget goes to infinity, we first prove that at each stage, the PCS converges to 1. The process of our algorithm at each single stage is OCBA adapted from \cite{chen2000simulation}. OCBA tries to identify the alternative with highest mean from a set of normal random variables (alternatives) with means \(J_i\) and known variances \(\sigma_i^2\), \(i = 1,2,\dots, k\) by efficiently allocating samples that maximizes APCS. OCBA assumes that \(J_i\) is also normally distributed. Here we present OCBA again in Algorithm \ref{alg:OCBA} for convenience. The budget allocation process is similar to \Crefrange{eq:budget_allocation1}{eq:budget_allocation3}. First define
\begin{align*}
\bar{J}_i :=& \frac{1}{l_i}\sum_{m=1}^{l_i}\hat{J}^m_i,\\
b :=& \arg\max_i \bar{J}_i,\\
\delta(b,i) :=& \bar{J}_b - \bar{J}_i, ~\forall i \ne b,
\end{align*}	
where \(l_i\) is the number of samples for alternative \(i\), \(\hat{J}^m_i\) is the \(m\)-th sample of \(J_i\) for \(1\le i\le k\), \(1\le m \le l_i  \).
The new allocations \((\tilde{l}_1,\tilde{l}_2,\dots,\tilde{l}_{k})\) with budget \(T > \sum_i l_i\) can be obtained by solving the set of equations:
\begin{align}
\frac{\tilde{l}_i}{\tilde{l}_j} &= \big(\frac{\sigma_{i}/\delta(b,i) } {\sigma_{j}/\delta(b,j)}\big)^2, ~ \forall i\ne j \ne b, \label{eq:budget_allocationOCBA1}\\
\tilde{l}_{b} &= \sigma_{b}\sqrt{\sum_{i = 1, i\ne b}^{k}  \frac{\tilde{l}_i^2}{(\sigma_{i})^2 } }, \label{eq:budget_allocationOCBA2}\\
\sum_{i = 1}\tilde{l}_i &= T \label{eq:budget_allocationOCBA3},
\end{align}	
where \(\sigma_{i}\) is the standard deviation of the \(i-\)th reward distribution.
As in Remark 2, $\sigma_i$ is assumed to be known, but in practice can be unknown and approximated by sample standard deviation $\hat{\sigma}_i = \sqrt{\frac{1}{l_i } \sum_{m = 1}^{l_i} (\hat{J}^m_i - \bar{J}_i)^2}$.

\begin{algorithm}
	\KwIn{Total sampling budget \(T\), initial sample size $n_0$}
	\KwOut{Index of optimal action \(\hat{b}\)}
	Sample each of the $k$ alternatives \(n_0\) times\;
	Set counter \(l_i \leftarrow n_0 ~ \forall i = 1,2,\dots,k\)\;
	\(l \leftarrow kn_0\)\;
	Calculate \(\bar{J}_i\) and \(\hat{\sigma}_{i}^2\), \(\forall i = 1,2,\dots,k\)\;
	\While{\(l<=T\)}{
		Compute new budget allocation
		\((\tilde{l}_1,\tilde{l}_2,\dots,\tilde{l}_{k})\) by solving eq. \eqref{eq:budget_allocationOCBA1}-\eqref{eq:budget_allocationOCBA3} with budget \(l+1\)\;
		Sample \(\hat{i} = \arg\max_{1\le i \le k} (\tilde{l}_i - l_i)\)\; 
		Update \(\bar{J}_{\hat{i}}\) (and \(\hat{\sigma}_{\hat{i}}^2\) if sample variance is used)\;
		\(l_{\hat{i}} \leftarrow  l_{\hat{i}} + 1\)\;
		\(l \leftarrow l+1\)\;
	}
	return \(\hat{b} = \arg \max_{1\le i \le k} \bar{J}_i\) \;
	\caption{One-stage OCBA}
	\label{alg:OCBA}
\end{algorithm}	
\begin{lemma}\label{lemma:OCBA_correct}
	Given a set of \(k\) normal random variables (actions) with mean \(J_i\) and variance \(\sigma_i^2\), \(i = 1,2,\dots, k\), where \(J_i.\) are also normally distributed. Suppose OCBA is run with sampling budget \(T\). Define the PCS
	\begin{align*}
	PCS = P \bigg[\bigcap_{i = 1,i\ne b}^k (\tilde{J}_b-\tilde{J}_i) \ge 0\bigg],
	\end{align*}
	where \(\tilde{J}_i\) is the posterior distribution of \(J_i\) given \(l_i\) samples \(\forall i = 1,2,\dots,k\).
	Then, \(PCS \rightarrow 1\) as \(T\rightarrow\infty\).
\end{lemma}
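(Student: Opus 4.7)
My plan is to reduce the claim to three ingredients: (i) every alternative is sampled infinitely often as $T \to \infty$, (ii) each sample mean converges almost surely to the corresponding true mean, and (iii) each posterior concentrates on that true mean. Combined with a union bound over the $k-1$ suboptimal alternatives, these will yield PCS $\to 1$.

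The first step is the main obstacle, since it requires analyzing the allocation rule when both the reference index $b$ and the gaps $\delta(b,\cdot)$ are random and change over time. My strategy is proof by contradiction: suppose $\limsup_T l_{i^*} < \infty$ for some alternative $i^*$. By the allocation equations \eqref{eq:budget_allocationOCBA1}--\eqref{eq:budget_allocationOCBA3}, the targets $\tilde{l}_j$ are tied together by pairwise ratios that depend only on the sample variances and on $\delta(b,\cdot)$. Since $\sum_j \tilde{l}_j = l+1$ grows with $T$, each $\tilde{l}_j \to \infty$ provided these ratios remain bounded. When the undersampled index $i^*$ coincides with the running leader $b$, growth is immediate from \eqref{eq:budget_allocationOCBA2}; otherwise it follows from the fact that the indices sampled infinitely often have $\bar{J}_j \to J_j$, so $|\delta(b,i^*)|$ stays bounded above and bounded away from zero almost surely (using uniqueness of the true optimum, which holds almost surely under the continuous normal prior on the $J_i$). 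In either case $\tilde{l}_{i^*} - l_{i^*} \to \infty$, so the most-starving rule in Algorithm \ref{alg:OCBA} must eventually select $i^*$, contradicting boundedness of $l_{i^*}$. Care is needed to handle the random switching of $b$ and to rule out the measure-zero event on which two true means coincide.

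Once $l_i \to \infty$ for every $i$, the strong law of large numbers gives $\bar{J}_i \to J_i$ almost surely, so the sample-best index $b$ eventually equals the true-best index $b^* = \arg\max_i J_i$. For the third ingredient, the posterior given samples in \eqref{eq:posterior} specializes to $\tilde{J}_i \sim N(\bar{J}_i, \sigma_i^2/l_i)$, whose variance vanishes, so $\tilde{J}_i \to J_i$ in probability. Hence $\tilde{J}_b - \tilde{J}_i \to J_{b^*} - J_i > 0$ in probability for every $i \neq b^*$, which implies $P[\tilde{J}_b - \tilde{J}_i < 0] \to 0$. A union bound then yields
\begin{equation*}
P\!\left[\bigcap_{i \neq b}(\tilde{J}_b - \tilde{J}_i \ge 0)\right] \ge 1 - \sum_{i \neq b} P\!\left[\tilde{J}_b - \tilde{J}_i < 0\right] \longrightarrow 1,
\end{equation*}
completing the argument. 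The technically delicate part remains the infinitely-often sampling claim; the remaining steps are standard consequences of the LLN and posterior contraction for the conjugate normal model.
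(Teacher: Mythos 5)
Your proposal is correct and follows essentially the same route as the paper's proof: show every alternative is sampled infinitely often by exploiting the allocation equations \eqref{eq:budget_allocationOCBA1}--\eqref{eq:budget_allocationOCBA3}, then use the normal posterior $\tilde{J}_i \sim N(\bar{J}_i,\sigma_i^2/l_i)$ with vanishing variance together with the Bonferroni bound to conclude $PCS \to 1$. The only difference is cosmetic: you establish the infinitely-often step by contradiction via the most-starving rule (and explicitly invoke a.s.\ distinctness of the means so the gaps stay bounded away from zero), whereas the paper argues directly that at least one $l_i \to \infty$ and propagates this to all indices through the ratio and square-root allocation equations, treating the realized counts as satisfying those equations; your version is, if anything, slightly more careful about the sequential dynamics and the limiting sign of $\bar{J}_b-\bar{J}_i$.
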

\begin{proof}
	The \(PCS\) can be lower bounded by APCS, i.e., by the Bonferroni inequality
	\begin{align*}
	PCS =& P \bigg[\bigcap_{i = 1,i\ne b}^k (\tilde{J}_b-\tilde{J}_i) \ge 0\bigg]\\
	\ge&1-\sum_{i=1,i\ne b}^{k}P \bigg[ \tilde{J}_b-\tilde{J}_i \le 0 \bigg]  \\
	=&APCS .
	\end{align*}
	Thus, to prove that \(PCS\rightarrow 1\), it suffices to prove \(APCS \rightarrow 1\), i.e.,
	\begin{align*}
	\sum_{i=1,i\ne b}^{k}P \bigg[ (\tilde{J}_b-\tilde{J}_i)\le 0   \bigg] \rightarrow 0 ~~ \text{as} ~~ T\rightarrow\infty .
	\end{align*}
	Based on the normality assumption, the posterior distribution is also normal, i.e., \(\tilde{J}_i \sim N(\bar{J}_i,\sigma_i^2/l_i)\). Thus, \(\tilde{J}_b-\tilde{J}_i \sim N(\bar{J}_b - \bar{J}_i,\sigma_b^2/l_b + \sigma_i^2/l_i)\). Therefore,
	\begin{align}\label{eq:Phi}
	\sum_{i=1,i\ne b}^{k}P \bigg[ (\tilde{J}_b-\tilde{J}_i)\le 0   \bigg] = & \sum_{i=1,i\ne b}^{k} \Phi(-\frac{ \bar{J}_b-\bar{J}_i }{\sqrt{\sigma_b^2/l_b + \sigma_i^2/l_i}}),
	\end{align}
	where \(\Phi\) is the cdf of the standard normal distribution.
	Since
	\begin{align*}
	\sum_{i=1}^{k}l_i =& T,\\
	\end{align*}
	then when \(T\rightarrow \infty,\) at least one of the actions will be sampled infinitely many times, i.e., there exists an index \(i\) such that \(l_i \rightarrow \infty\). Then there are two possible cases: \(i \ne b\) and \(i = b\). 
	
	Case 1: \(i \ne b\)\\
	According to eq. \eqref{eq:budget_allocationOCBA1},
	\begin{align*}
	l_j = \Bigg(  \frac{\sigma_{j}/\delta(b,j) } {\sigma_{i}/\delta(b,i)} l_i  \Bigg)^2,~ \forall j \ne i, j\ne b.
	\end{align*}
	Since \(\sigma_i\) and \(\delta(b,i)\) are bounded for all i, \(l_j\rightarrow\infty, ~\forall j \ne b\).\\
	Therefore, by eq. \eqref{eq:budget_allocationOCBA2}, \(l_b\rightarrow \infty\).
	Thus, \(l_i\rightarrow\infty\) for all \(i = 1,2,\dots, k\).
	
	Case 2: \(i = b\)
	According to \eqref{eq:budget_allocationOCBA2},
	\begin{align*}
	{l}_{b} &= \sigma_{b}\sqrt{\sum_{i = 1, i\ne b}^{k}  \frac{{l}_i^2}{(\sigma_{i})^2 } } \rightarrow \infty.
	\end{align*}
	Thus there exists an index \(i\ne b\) such that \(l_i \rightarrow \infty\). By a similar argument in Case 1, we can conclude that \(l_i\rightarrow\infty\) for all \(i = 1,2,\dots, k\).\\
	In either case, we have \(l_i\rightarrow\infty\) for all \(i = 1,2,\dots, k\). Additionally, since \(\bar{J}_b\) is defined to be the maximum of all \(\bar{J}_i\), i.e., \(\bar{J}_b-\bar{J}_i \ge 0\) for all \(i\ne b\), eq. \eqref{eq:Phi} becomes
	\begin{align*}
	\sum_{i=1,i\ne b}^{k}P \bigg[ (\tilde{J}_b-\tilde{J}_i)\le 0   \bigg] = & \sum_{i=1,i\ne b}^{k} \Phi(-\frac{ \bar{J}_b-\bar{J}_i }{\sqrt{\sigma_b^2/l_b + \sigma_i^2/l_i}}) \\
	&\rightarrow 0
	\end{align*}
	as desired.
\end{proof}
A corollary follows directly from the lemma.
\begin{corollary}\label{corollary:LLN}
	Suppose one-stage OCBA is run with budget \(T\). Then 
	\begin{align*}
	\bar{J}_i\rightarrow \mathbb{E}[J_i] ~ \text{w.p. \(1\) as } T\rightarrow \infty, ~ \forall i = 1,2,\dots, k.
	\end{align*}
\end{corollary}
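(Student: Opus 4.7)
The plan is to reduce the corollary to the Strong Law of Large Numbers (SLLN) by exploiting the key by-product of the proof of Lemma 1: namely, the claim that under the OCBA allocation recursions \eqref{eq:budget_allocationOCBA1}--\eqref{eq:budget_allocationOCBA2}, one necessarily has $l_i \to \infty$ for \emph{every} alternative $i$ as $T \to \infty$. Once this is granted, the only remaining work is to show that the sample mean along each per-alternative stream converges to the true mean; this is a direct SLLN statement, so essentially no new technical machinery is required.

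More concretely, I would first recall the conclusion $l_i \to \infty$ for all $i = 1, \ldots, k$ established in Case~1 and Case~2 of the proof of Lemma~\ref{lemma:OCBA_correct}, noting that the argument there holds pathwise in $\omega$ because the allocation equations depend only on the observed sample means and sample variances, which remain finite throughout. Next, for each alternative $i$, I would view the raw samples $(\hat{J}_i^m)_{m \ge 1}$ as an i.i.d.\ sequence drawn from the distribution of $\hat{J}_i$ with common mean $\mathbb{E}[\hat{J}_i] = \mathbb{E}[\mathbb{E}[\hat{J}_i \mid J_i]] = \mathbb{E}[J_i]$ and finite variance $\sigma_i^2$, generated on a common probability space independently of the allocation rule. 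Under this coupling, OCBA simply reveals a (data-dependent) prefix of length $l_i = l_i(T, \omega)$ of this sequence, and
\[
\bar{J}_i \;=\; \frac{1}{l_i}\sum_{m=1}^{l_i} \hat{J}_i^m.
\]

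The argument then closes by combining two a.s.\ statements. By the classical SLLN, the deterministic averages $n^{-1}\sum_{m=1}^n \hat{J}_i^m$ converge to $\mathbb{E}[J_i]$ almost surely as $n \to \infty$. By Lemma~\ref{lemma:OCBA_correct} (the allocation half of its proof), $l_i(T,\omega) \to \infty$ almost surely as $T \to \infty$. A standard random-index composition then yields $\bar{J}_i \to \mathbb{E}[J_i]$ a.s., which is the claim. The main (and essentially only) obstacle is clarifying that the adaptive, data-driven nature of $l_i$ does not invalidate the SLLN. I would handle this by the coupling described above, so that $l_i$ is merely a random but almost-surely-divergent index into a fixed i.i.d.\ sequence; no stopping-time or martingale SLLN is needed. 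The rest is routine.
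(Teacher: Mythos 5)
Your proposal is correct and follows essentially the same route as the paper: the paper's proof simply invokes the strong law of large numbers together with the fact, established in the proof of Lemma~\ref{lemma:OCBA_correct}, that the allocation counts diverge. Your additional care with the data-dependent random index $l_i(T,\omega)$ (coupling with a fixed i.i.d.\ stream and composing two a.s.\ limits) is a legitimate fleshing-out of the same argument rather than a different approach.
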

The proof is a simple application of the strong law of large numbers, since \(l_b\rightarrow\infty\).
With this key lemma, we are ready to prove the first two theorems proposed in \Cref{sec:analysis}. We start with \Cref{thm:asym_unbias}.\\
\begin{proof}[Proof of \Cref{thm:asym_unbias}]
	The result can be proved by induction.\par 
	
	First observe that since \(N\rightarrow\infty\), each path is explored infinitely many times. Thus the number of samples in each stage also goes to infinity as \(N\rightarrow\infty\).\par 
	
	Suppose at some point of the algorithm, all nodes are expanded. If the current state node $\bx$ is at stage \(H-1\) (i.e., it will transit into a terminal node in the next transition), running Algorithm \ref{alg:OCBAselection} reduces to a single-stage problem, which is the same as OCBA in Algorithm \ref{alg:OCBA}. \(\hat{Q}(\bx, a)\) can be viewed as a set of alternatives for \(a\in A\). From \Cref{corollary:LLN}, it is straightforward that 
	\begin{align*}
	\lim_{N\rightarrow\infty} \bar{Q}(\bx, a) = Q(\bx, a).
	\end{align*}
	Therefore, since the reward function is bounded
	\begin{align*}
	\lim_{N\rightarrow\infty} \hat{V}(\bx) &= \lim_{N\rightarrow\infty}\max_{a\in A_{\bx}} \bar{Q}(\bx, a)  \\
											   &= \max_{a \in A_{\bx}} \lim_{N\rightarrow\infty} \bar{Q}(\bx, a)\\
											   &= V^*(\bx).
	\end{align*}
	Now suppose that the statement is true for all child state nodes $\by$ of a state $\bx$, i.e., \(\hat{V}(\by) \rightarrow V^*(\by) \) and $\by$ could be achieved from $\bx$. Then for $\bx$, the algorithm also reduces to OCBA. Thus from \Cref{corollary:LLN} again
	\begin{align*}
	\lim_{N\rightarrow\infty} \bar{Q}(\bx, a) 
	&= \lim_{N(\bx, a)\rightarrow\infty} \bar{Q}(\bx, a) \\
	&= \mathbb{E}[R(\bx, a)] + \mathbb{E}_{P(\bx, a)}[ V^*(\by) ] \\
	&= Q(\bx, a)
	\end{align*}
	for all child state-action pair $(\bx, a)$.
	It follows that 
	\begin{align*}
	\lim_{N\rightarrow\infty} \hat{V}(\bx) &\rightarrow V^*(\bx).
	\end{align*}
\end{proof}

\Cref{thm:asym_correct} is a direct result of \Cref{lemma:OCBA_correct}.
\begin{proof}[Proof of \Cref{thm:asym_correct}]
	Since we assume \(\hat{Q}(\bx, a)\) is normally distributed with known variance, the posterior distribution of \(\hat{Q}(\bx, a)\), i.e., \(\tilde{Q}(\bx, a)\), is also a normal random variable. Then, it follows directly from \Cref{lemma:OCBA_correct} that
	\begin{align*}
	P &\bigg[\bigcap_{a\in A_{\bx}, a\ne \hat{a}_{\bx}^*}^k (\lim_{N\rightarrow\infty}\tilde{Q}(\bx,\hat{a}_{\bx}^*)-\lim_{N\rightarrow\infty}\tilde{Q}(\bx, a)) \ge 0\bigg] = 1,\\
	& \forall i = 1,\dots,H, ~ \bx\in \bX, ~ a\in A.
	\end{align*}
\end{proof}

\section{Allocation strategy}
\begin{proof}[Proof of \Cref{thm:max_PCS}]
	The problem of maximizing APCS with budget constraint can be formulated as 
	\begin{align*}
	&\max_{\tilde{N}(\bx, a), a \in A_{\bx}} 1-\sum_{a\in A, a\ne \hat{a}_{\bx}^*} P\bigg[ \tilde{Q}(\bx,\hat{a}_{\bx}^*) \le \tilde{Q}(\bx, a) \bigg]\\
	& s.t. \sum_{a \in A_{\bx}}\tilde{N}(\bx, a) = N.
	\end{align*}
	With Lagrange multiplier \(\lambda\), the Lagrangian can be written as
	\begin{align*}
	L 
	&= 1 - \sum_{a\in A_{\bx}, a\ne \hat{a}_{\bx}^*}P\bigg[ \tilde{Q}(\bx,\hat{a}_{\bx}^*) \le \tilde{Q}(\bx, a) \bigg] + \lambda (\sum_{a \in A_{\bx}}\tilde{N}(\bx, a) - N) \\
	&= 1 - \sum_{a\in A_{\bx}, a\ne \hat{a}_{\bx}^*}\Phi(\frac{\bar{Q}(\bx, a) - \bar{Q}(\bx,\hat{a}_{\bx}^*)}{\sigma_{\bx}(a, \hat{a}_{\bx}^*) } ) + \lambda (\sum_{a \in A_{\bx}}\tilde{N}(\bx, a) - N) \\
	&= 1- \sum_{a\in A_{\bx}, a\ne \hat{a}_{\bx}^*} \Phi(-\frac{\delta_{\bx}(\hat{a}_{\bx}^*,a) }{\sigma_{\bx}(a, \hat{a}_{\bx}^*) } ) + \lambda (\sum_{a \in A_{\bx}}\tilde{N}(\bx,a) - N) ,
	\end{align*}
	where 
	\begin{align*}
	\sigma^2_{\bx}(a, \hat{a}_{\bx}^*) 
	&= \frac{\sigma^2(\bx, \hat{a}_{\bx}^*)}{N(\bx, \hat{a}_{\bx}^*)} + \frac{\sigma^2(\bx, a)}{N(\bx,a)}
	\end{align*}
	Apply Karush-Kuhn-Tucker (KKT) conditions \cite{boyd2004convex}:
	\begin{itemize}[leftmargin=*]
		\item primal feasible
		\begin{align}
		&N(\bx,a) \ge 0, \forall a\in A   \label{app:eq:KKT:primal_feasible1}\\
		&\sum_{a \in A_{\bx}}\tilde{N}(\bx,a) - N = 0 \label{app:eq:KKT:primal_feasible2},
		\end{align}
		
		\item stationarity
		\begin{align}
		&\frac{\partial L}{\partial N(\bx,a)}
		= \frac{\partial L}{\partial (-\frac{\delta_{\bx}(\hat{a}_{\bx}^*,a) }{\sigma_{\bx}(a, \hat{a}_{\bx}^*) })} \frac{\partial (-\frac{\delta_{\bx}(\hat{a}_{\bx}^*,a) }{\sigma_{\bx}(a, \hat{a}_{\bx}^*) })}{\partial \sigma_{\bx}(a, \hat{a}_{\bx}^*)  } \frac{\partial \sigma_{\bx}(a, \hat{a}_{\bx}^*) }{\partial N(\bx, a)} \nonumber \\
		&=0
		\end{align}
	\end{itemize}

Case 1: \(a \ne \hat{a}_{\bx}^*\) 
\begin{align}\label{app:eq:KKT:der_a}
&\frac{\partial L}{\partial N(\bx,a)} \nonumber \\
&=  \frac{\sigma^2(\bx, a) \delta_{\bx}(\hat{a}_{\bx}^*,a) }{ N^2(\bx,a)\sigma_{\bx}^3(a, \hat{a}_{\bx}^* )\sqrt{2\pi}} \exp{-\frac{\delta_{\bx}(\hat{a}_{\bx}^*,a)^2 }{\sigma_{\bx}^2(a, \hat{a}_{\bx}^*) } }+ \lambda \nonumber \\
&=0
\end{align}.

Case 2: \(a = \hat{a}_{\bx}^*\) 
\begin{align}\label{app:eq:KKT:der_a^*}
&\frac{\partial L}{\partial N(\bx,\hat{a}_{\bx}^*)}\nonumber \\
&=  \sum_{a\in A_{\bx}, a\ne \hat{a}_{\bx}^*}\frac{\sigma^2(\bx,\hat{a}_{\bx}^*) \delta_{\bx}(\hat{a}_{\bx}^*,a) }{ N^2(\bx,\hat{a}_{\bx}^*)\sigma_{\bx}^3(a, \hat{a}_{\bx}^* )\sqrt{2\pi}} \exp{-\frac{\delta_{\bx}(\hat{a}_{\bx}^*,a)^2 }{\sigma_{\bx}^2(a, \hat{a}_{\bx}^*) } }+ \lambda \nonumber\\
&= 0
\end{align}.

From \Cref{app:eq:KKT:der_a},
\begin{align}\label{app:eq:KKT:substitute}
 \frac{\delta_{\bx}(\hat{a}_{\bx}^*,a) }{\sigma_{\bx}^3(a, \hat{a}_{\bx}^* )\sqrt{2\pi}} \exp{-\frac{\delta_{\bx}(\hat{a}_{\bx}^*,a)^2 }{\sigma_{\bx}^2(a, \hat{a}_{\bx}^*) } } =  -\lambda\frac{N^2(\bx,a)}{\sigma^2(\bx,a)}.
\end{align}
Plug \Cref{app:eq:KKT:substitute} into \Cref{app:eq:KKT:der_a^*} yields
\begin{align*}
 \frac{\sigma^2(\bx, \hat{a}_{\bx}^*)}{N^2(\bx, \hat{a}_{\bx}^*)}\sum_{a\in A_{\bx}, a\ne \hat{a}_{\bx}^*} \lambda\frac{N^2(\bx, a)}{\sigma^2(\bx,a)} = \lambda,
\end{align*}
i.e.,
\begin{align}\label{app:eq:budget_allocation:2}
 N(\bx,  \hat{a}_{\bx}^*)
=
\sqrt{
\sigma^2(\bx,\hat{a}_{\bx}^*)\sum_{a\in A_{\bx}, a\ne \hat{a}_{\bx}^*} \frac{N^2(\bx, a)}{\sigma^2(\bx, a)} 
}.
\end{align}
After sufficiently large number of samples, we may conclude from \Cref{app:eq:budget_allocation:2} that our algorithm would focus more on sampling the sample optimal. Thus, we may assume that \(N(\bx, \hat{a}_{\bx}^*) \gg N(\bx,a)\) for all suboptimal actions \(a\in A_{\bx}\). \par 

Now, for two suboptimal actions \(a \ne \tilde{a} \ne \hat{a}_{\bx}^*  \), we have
\begin{align*}
&\frac{\sigma^2(\bx, a) \delta_{\bx}(\hat{a}_{\bx}^*,a) }{ N^2(\bx,a) (\frac{\sigma^2(\bx, \hat{a}_{\bx}^*)}{N(\bx, \hat{a}_{\bx}^*)} + \frac{\sigma^2(\bx, a)}{N(\bx,a)})^{3/2} } \exp{-\frac{\delta_{\bx}(\hat{a}_{\bx}^*,a)^2 }{ 2 (\frac{\sigma^2(\bx, \hat{a}_{\bx}^*)}{N(\bx, \hat{a}_{\bx}^*)} + \frac{\sigma^2(\bx, a)}{N(\bx,a)}) } }  \\
&=
\frac{\sigma^2(\bx,\tilde{a}) \delta_{\bx}(\hat{a}_{\bx}^*,\tilde{a}) }{ N\bx^2(\bx,\tilde{a}) (\frac{\sigma^2(\bx, \hat{a}_{\bx}^*)}{N\bx(\bx, \hat{a}_{\bx}^*)} + \frac{\sigma^2(\bx, \tilde{a})}{N(\bx,\tilde{a})})^{3/2} } \exp{-\frac{\delta_{\bx}(\hat{a}_{\bx}^*,\tilde{a})^2 }{ 2 (\frac{\sigma^2(\bx, \hat{a}_{\bx}^*)}{N(\bx, \hat{a}_{\bx}^*)} + \frac{\sigma^2(\bx, \tilde{a})}{N(\bx,\tilde{a})}) } }.
\end{align*}
Apply the \(N(\bx, \hat{a}_{\bx}^*) \gg N(\bx,a)\) assumption:
\begin{align*}
&\frac{\sigma^2(\bx,a) \delta_{\bx}(\hat{a}_{\bx}^*,a) }{ N^2(\bx,a) ( \frac{\sigma^2(\bx, a)}{N(\bx,a)})^{3/2} } \exp{-\frac{\delta_{\bx}(\hat{a}_{\bx}^*,a)^2 }{ 2 ( \frac{\sigma^2(\bx, a)}{N(\bx,a)}) } }  \\
&=
\frac{\sigma^2(\bx,\tilde{a}) \delta_{\bx}(\hat{a}_{\bx}^*,\tilde{a}) }{ N^2(\bx,\tilde{a}) (\frac{\sigma^2(\bx, \tilde{a})}{N(\bx,\tilde{a})})^{3/2} } \exp{-\frac{\delta_{\bx}(\hat{a}_{\bx}^*,\tilde{a})^2 }{ 2 (\frac{\sigma^2(\bx, \tilde{a})}{N(\bx,\tilde{a})}) } }.
\end{align*}
i.e.,
\begin{align*}
\Big(\frac{N(\bx,\tilde{a})}{N(\bx,a)} \Big)^{1/2}
&= \frac{\sigma^2(\bx,a)}{\sigma^2(\bx,\tilde{a})}  \frac{\delta_{\bx}(\hat{a}_{\bx}^*,\tilde{a}) }{\delta_{\bx}(\hat{a}_{\bx}^*,a) } \\
&\exp(\frac{\delta_{\bx}(\hat{a}_{\bx}^*,a)^2 }{ 2 ( \frac{\sigma^2(\bx, a)}{N(\bx,a)}) } - \frac{\delta_{\bx}(\hat{a}_{\bx}^*,\tilde{a})^2 }{ 2 (\frac{\sigma^2(\bx, \tilde{a})}{N(\bx,\tilde{a})}) }).
\end{align*}
Taking log on both sides yields
\begin{align*}
&\log{N(\bx,\tilde{a})}  - \log(N(\bx,a))
= 2\log{ \frac{\sigma^2(\bx,a)}{\sigma^2(\bx,\tilde{a})}  \frac{\delta_{\bx}(\hat{a}_{\bx}^*,\tilde{a}) }{\delta_{\bx}(\hat{a}_{\bx}^*,a) }} \\
&+\frac{\delta_{\bx}(\hat{a}_{\bx}^*,a)^2 }{  \frac{\sigma^2(\bx, a)}{N(\bx,a)} } - \frac{\delta_{\bx}(\hat{a}_{\bx}^*,a)^2 }{  \frac{\sigma^2(\bx, \tilde{a})}{N(\bx,\tilde{a})} }.
\end{align*}
When the number of samples is sufficiently large (\(N\rightarrow \infty \)), the log terms can be neglected compared to the terms linear in \(N(\bx,a)\) or \(N(\bx,\tilde{a})\). Therefore, removing the log terms yields,

\begin{align*}
\frac{\delta_{\bx}(\hat{a}_{\bx}^*,a)^2 }{  \frac{\sigma^2(\bx, a)}{N(\bx,a)} }
=  \frac{\delta_{\bx}(\hat{a}_{\bx}^*,a)^2 }{  \frac{\sigma^2(\bx, \tilde{a})}{N(\bx,\tilde{a})} },
\end{align*}
namely,
\begin{align*}
\frac{\tilde{N}(\bx,a)}{\tilde{N}(\bx,\tilde{a})} =& 
\Bigg(\frac{\sigma(\bx, a)/\delta_{\bx}(\hat{a}_{\bx}^*,a)  } {\sigma(\bx, \tilde{a})/\delta_{\bx}(\hat{a}_{\bx}^*,\tilde{a}) }\Bigg)^2, \nonumber \\
~ &\forall  a\, , \tilde{a}\ne \hat{a}_{\bx}^*.
\end{align*}
\end{proof}

\section{Performance bound analysis}\label{app:sec:performance_analysis}
\begin{proof}[Proof of \Cref{thm:PCS_bound}]
	When the number of samples at node $\bx$ is large, we assume that \(N(\bx,a)\) satisfies \Crefrange{eq:budget_allocation1}{eq:budget_allocation2}.\par 
	
	From \Cref{eq:budget_allocation1}, we have 
	\begin{align}\label{eq:app:N_substitute}
	N(\bx, \tilde{a}) &= \Big( \frac{\sigma(\bx, \tilde{a}) \delta_{\bx}(\hat{a}_{\bx}^*,a)  }{\sigma(\bx, a) \delta_{\bx}(\hat{a}_{\bx}^*,\tilde{a})  }\Big)^2 N(\bx, a), \\
	&\forall \tilde{a}, a \ne \hat{a}^*_{\bx}. \nonumber
	\end{align}
	In this way, we can express the budget allocation to any suboptimal action \(\tilde{a}\) as the product of the budget allocation to a particular suboptimal action \(a\) and the factor
	\begin{align*}
	r_{\bx}( \tilde{a}, a) = \Big( \frac{\sigma(\bx, \tilde{a}) \delta_{\bx}(\hat{a}_{\bx}^*,a)  }{\sigma(\bx, a) \delta_{\bx}(\hat{a}_{\bx}^*, \tilde{a})  }\Big)^2 .
	\end{align*}
	From \Cref{eq:budget_allocation2}:
	\begin{align*}
	N(\bx, \hat{a}^*_{\bx}) = \sigma(\bx, \hat{a}^*_{\bx}) \sqrt{\sum_{\tilde{a} \in A_{\bx}, \tilde{a} \ne \hat{a}^*_{\bx}}  \frac{(N(\bx, \tilde{a})^2  )}{\sigma^2(\bx, \tilde{a})} }.
	\end{align*}
	Substitute \(N(\bx, \tilde{a})\) from \Cref{eq:app:N_substitute} yields
	\begin{align*}
	N(\bx, \hat{a}^*_{\bx}) = N(\bx, a) \sigma(\bx, \hat{a}^*_{\bx}) \sqrt{\sum_{\tilde{a} \in A_{\bx}, \tilde{a} \ne \hat{a}^*_{\bx}}  \frac{(r_{\bx}(\tilde{a}, a ) )^2 }{\sigma^2(\bx, \tilde{a})} } ,
	\end{align*}
	i.e.,
	\begin{align*}
	N(\bx, a) =  \frac{N(\bx, \hat{a}^*_{\bx})}{\sigma(\bx, \hat{a}^*_{\bx}) \sqrt{\sum_{\tilde{a} \in A_{\bx}, \tilde{a} \ne \hat{a}^*_{\bx}}  \frac{(r_{\bx}(\tilde{a}, a ) )^2 }{\sigma^2(\bx, \tilde{a})} } }.
	\end{align*}
	Since PCS is lower bounded by APCS, and the posterior \(\tilde{Q}(\bx,a)\) is normally distributed with 
	\begin{align*}
	\tilde{Q}(\bx,a) \sim N(\bar{Q}(\bx,a), \frac{\sigma^2(\bx, a)}{N(\bx,a)}),
	\end{align*}
	then
	\begin{align*}
	PCS 
	&\ge APCS \\
	&= 1-\sum_{a\in A_{\bx}, a\ne \hat{a}_{\bx}^*} P\bigg[ \tilde{Q}(\bx,\hat{a}_{\bx}^*) \le \tilde{Q}(\bx,a) \bigg] \\
	&= 1- \sum_{a\in A_{\bx}, a\ne \hat{a}_{\bx}^*} \Phi(\frac{\bar{Q}(\bx,a) - \bar{Q}(\bx,\hat{a}_{\bx}^*)}{\sigma_{\bx}(a, \hat{a}_{\bx}^*) } )\\
	&= 1- \sum_{a\in A_{\bx}, a\ne \hat{a}_{\bx}^*} \Phi(-\frac{\delta_{\bx}(\hat{a}_{\bx}^*,a) }{\sigma_{\bx}(a, \hat{a}_{\bx}^*) } ),
	\end{align*}
	where the second equality is because \(\tilde{Q}(\bx,\hat{a}_{\bx}^*) - \tilde{Q}(\bx,a)\) is normally distributed with mean \(\bar{Q}(\bx,a) - \bar{Q}(\bx,\hat{a}_{\bx}^*)\) and variance 
	\begin{align*}
	\sigma^2_{\bx}(a, \hat{a}_{\bx}^*) 
	&= \frac{\sigma^2(\bx, \hat{a}_{\bx}^*)}{N(\bx, \hat{a}_{\bx}^*)} + \frac{\sigma^2(\bx, a)}{N(\bx,a)} \\
	&= \frac{1}{N(\bx, \hat{a}_{\bx}^*)} \Bigg( \sigma^2(\bx, \hat{a}_{\bx}^*) + \\
	 &\sigma(\bx, \hat{a}_{\bx}^*) \sigma^2(\bx, a) \sqrt{\sum_{\tilde{a} \in A_{\bx}, \tilde{a} \ne \hat{a}^*_{\bx}}  \frac{(r_{\bx}(\tilde{a}, a ) )^2 }{\sigma^2(\bx, \tilde{a})} }   \Bigg).
	\end{align*}
	Apply inequality \(\sqrt{\sum_{i = 1}^{n}c_i^2} \le \sum_{i=1}^{n}\sqrt{c_i^2} = \sum_{i=1}^{n} c_i  \) for positive numbers $c_i$'s yields
	\begin{align*}
	\sigma^2_{\bx}(a, \hat{a}_{\bx}^*) 
	&\le \frac{1}{N(\bx, \hat{a}_{\bx}^*)} \Bigg( \sigma^2(\bx, \hat{a}_{\bx}^*) + \\
	&\sigma(\bx, \hat{a}_{\bx}^*) \sigma^2(\bx, a) \sum_{\tilde{a} \in A_{\bx}, \tilde{a} \ne \hat{a}^*_{\bx}}  \frac{ r_{\bx}(\tilde{a}, a )  }{\sigma(\bx, \tilde{a})}   \Bigg).
	\end{align*}
	Since APCS is decreasing in \(\sigma^2_{\bx}(a, \hat{a}_{\bx}^*) \), we have
	\begin{align*}
	&PCS 
	\ge 1- \\
	&\sum_{a\in A_{\bx}, a\ne \hat{a}_{\bx}^*} \Phi \bigg(-\frac{\delta_{\bx}(\hat{a}_{\bx}^*,a)  \sqrt{N(\bx, \hat{a}_{\bx}^*) }}{ \sqrt{ \sigma^2(\bx, \hat{a}_{\bx}^*) + 
			\sigma(\bx, \hat{a}_{\bx}^*) \sigma^2(\bx, a) \sum_{\tilde{a} \in A_{\bx}, \tilde{a} \ne \hat{a}^*_{\bx}}  \frac{ r_{\bx}(\tilde{a}, a )  }{\sigma(\bx, \tilde{a})}   } } \bigg)
	\end{align*}
	as desired
\end{proof}
\end{appendices}



%

%

\begin{IEEEbiographynophoto}{Yunchuan Li}
received a bachelor's degree in Automation from University of Electronic Science and Technology of China (UESTC). He is currently a Ph.D. candidate in the Department of Electrical and Computer Engineering at the University of Maryland, College Park. His research focuses on optimization and control with applications to operations research problems.
\end{IEEEbiographynophoto}


\begin{IEEEbiographynophoto}{Michael C. Fu}
(S’89–M’89–SM’06–F’08) received degrees in mathematics and EECS from MIT in 1985
and a Ph.D. in applied math from Harvard in 1989. Since 1989, he has been at
the University of Maryland, College Park, currently holding the Smith Chair
of Management Science. He also served as the Operations Research Program
Director at the National Science Foundation. His research interests include
simulation optimization and stochastic gradient estimation. He is a Fellow of
the Institute for Operations Research and the Management Sciences (INFORMS).
\end{IEEEbiographynophoto}

\begin{IEEEbiographynophoto}{Jie Xu}
(S’01-M’09-SM’17)
	received the Ph.D. degree in industrial engineering and management sciences from Northwestern University, the M.S. degree in computer science from The State University of New York, Buffalo, the M.E. degree in electrical engineering from Shanghai Jiaotong University, and the B.S. degree in electrical engineering from Nanjing University. He is currently an Associate Professor of Systems Engineering and Operations Research at George Mason University. His research interests are data analytics, stochastic simulation and optimization, with applications in cloud computing, manufacturing, and power systems. 
\end{IEEEbiographynophoto}

\end{document}